\documentclass[journal,12pt, draftclsnofoot, onecolumn]{IEEEtran}
\topmargin 0in\headheight 0in
\textheight 8.1in
\oddsidemargin 0pt\evensidemargin \oddsidemargin
\marginparwidth 0.5in\textwidth 6.5in


\usepackage{times,amssymb,amsmath,epsfig,nicefrac,euscript,cite,mathrsfs,color}

\newcommand\nc\newcommand
\nc\bfa{{\boldsymbol a}}\nc\bfA{{\boldsymbol A}}\nc\cA{{\mathcal A}}
\nc\bfb{{\boldsymbol b}}\nc\bfB{{\boldsymbol B}}\nc\cB{{\mathcal B}}
\nc\bfc{{\boldsymbol c}}\nc\bfC{{\boldsymbol C}}\nc\cC{{\mathcal C}}
\nc\sC{{\mathscr C}}
\nc\bfd{{\boldsymbol d}}\nc\bfD{{\boldsymbol D}}\nc\cD{{\mathcal D}}
\nc\bfe{{\boldsymbol e}}\nc\bfE{{\boldsymbol E}}\nc\cE{{\mathcal E}}
\nc\bff{{\boldsymbol f}}\nc\bfF{{\boldsymbol F}}\nc\cF{{\mathcal F}}
\nc\bfg{{\boldsymbol g}}\nc\bfG{{\boldsymbol G}}\nc\cG{{\mathcal G}}
\nc\bfh{{\boldsymbol h}}\nc\bfH{{\boldsymbol H}}\nc\cH{{\mathcal H}}
\nc\bfi{{\boldsymbol i}}\nc\bfI{{\boldsymbol I}}\nc\cI{{\mathcal I}}
\nc\bfj{{\boldsymbol j}}\nc\bfJ{{\boldsymbol J}}\nc\cJ{{\mathcal J}}
\nc\bfk{{\boldsymbol k}}\nc\bfK{{\boldsymbol K}}\nc\cK{{\mathcal K}}
\nc\bfl{{\boldsymbol l}}\nc\bfL{{\boldsymbol L}}\nc\cL{{\mathcal L}}
\nc\bfm{{\boldsymbol m}}\nc\bfM{{\boldsymbol M}}\nc\sM{{\mathscr M}}
\nc\bfn{{\boldsymbol n}}\nc\bfN{{\boldsymbol N}}\nc\cN{{\mathcal N}}
\nc\bfo{{\boldsymbol o}}\nc\bfO{{\boldsymbol O}}\nc\cO{{\mathcal O}}
\nc\bfp{{\boldsymbol p}}\nc\bfP{{\boldsymbol P}}\nc\cP{{\mathcal P}}
\nc\bfq{{\boldsymbol q}}\nc\bfQ{{\boldsymbol Q}}\nc\cQ{{\mathcal Q}}
\nc\bfr{{\boldsymbol r}}\nc\bfR{{\boldsymbol R}}\nc\cR{{\mathcal R}}
\nc\bfs{{\boldsymbol s}}\nc\bfS{{\boldsymbol S}}\nc\cS{{\mathcal S}}
\nc\bft{{\boldsymbol t}}\nc\bfT{{\boldsymbol T}}\nc\cT{{\mathcal T}}
\nc\bfu{{\boldsymbol u}}\nc\bfU{{\boldsymbol U}}\nc\cU{{\mathcal U}}
\nc\bfv{{\boldsymbol v}}\nc\bfV{{\boldsymbol V}}\nc\cV{{\mathcal V}}
\nc\bfw{{\boldsymbol w}}\nc\bfW{{\boldsymbol W}}\nc\cW{{\mathcal W}}
\nc\bfx{{\boldsymbol x}}\nc\bfX{{\boldsymbol X}}\nc\cX{{\mathcal X}}
\nc\bfy{{\boldsymbol y}}\nc\bfY{{\boldsymbol Y}}\nc\cY{{\mathcal Y}}
\nc\bfz{{\boldsymbol z}}\nc\bfZ{{\boldsymbol Z}}\nc\cZ{{\mathcal Z}}

\nc{\remove}[1]{}

\def\hq{\qopname\relax{no}{H_q}}

\def\avg{{\mathbb E}}

\newtheorem{theorem}{Theorem}
\newtheorem{definition}{Definition}
\newtheorem{lemma}{Lemma}
\newtheorem{proposition}[theorem]{Proposition}
\newtheorem{corollary}{Corollary}

\newtheorem{remark}{\indent Remark}


\newcommand\ff{{\mathbb F}}

\newcommand\integers{{\mathbb Z}}
\newcommand\define{{\stackrel{\triangle}{=}}}

\allowdisplaybreaks


\begin{document}
\title{An Upper Bound On the Size of Locally Recoverable Codes}
\author{Viveck Cadambe~\IEEEmembership{Member,~IEEE} and Arya Mazumdar~\IEEEmembership{Member,~IEEE}

\thanks{A preliminary version of this work has appeared in
  IEEE International Symposium on Network Coding (NetCod), 2013. This work was supported in
  part by  NSF CCF 1318093.}  \thanks{A.~Mazumdar is 
   with the Department of Electrical and Computer
  Engineering, University of Minnesota, Minneapolis, MN 55455 (Email:
  arya@umn.edu).} 
\thanks{Viveck Cadambe is with the Department of Electrical Engineering, 
  Pennsylvania State University, University Park PA 16802 (Email: viveck@engr.psu.edu).}}

\maketitle

\begin{abstract}
In a {\em locally recoverable} or {\em repairable} code,
any symbol of a codeword can be recovered by reading only a
small (constant) number of other symbols. The notion of local
recoverability is important in the area of distributed storage where
a most frequent error-event is a single storage node failure (erasure).  A common
objective is to repair the node by downloading data from as few
other storage node as possible. In this paper, we bound the minimum 
distance of a code in terms of  its length, size and locality. Unlike previous bounds, our 
bound follows from a significantly simple analysis and depends on the size of the alphabet being used. 
It turns out that the binary Simplex codes satisfy our bound with equality; hence the Simplex codes are the first example of
a optimal binary locally repairable code family.
We also provide achievability results based on random coding and concatenated codes that are numerically verified to be close to our bounds.
\end{abstract}

\section{Introduction}
The increased demand of cloud computing and storage services in current times has led to a corresponding surge in the study and deployment of erasure-correcting codes, or simply erasure codes, for distributed storage systems. In the information and coding theory community, this has led to the research of some new aspects of codes particularly tailored to the application to storage systems. The topic of interest of this paper is the locality of repair of erasure codes.

It is well known that an erasure code with length $n$, dimension $k$ and minimum distance $d$, or an $(n,k,d)$  code, can recover from \emph{any} set of $d-1$ erasures. In addition, the code is said to have \emph{locality} $r$ if any \emph{single} erasure can be recovered from some set of $r$ symbols of the codeword. From an engineering perspective, when an $(n,k,d)$ code is used to store information in $n$ storage nodes, the parameter $d$ represents the worst-case (node) failure scenario from which the storage system can recover. The parameter $r,$ on the other hand, represents the efficiency of recovery from a (relatively) more commonly occurring scenario - a single node failure. It is therefore desirable to have a large value of $d$ and a small value of $r$. Much literature in classical coding theory has been devoted to understanding the largest possible value of $d$ - the minimum distance - when the parameters $(n,k)$ are fixed; the well-known results from this body \cite{Roth_Book} of work include the Singleton bound, and code constructions that achieve this bound (such as Reed-Solomon codes). The study of minimizing the locality, $r$, was initiated recently in \cite{Gopalan_Locality, oggier2011self} and furthered in \cite{Dimakis_LRC,VijayKumar_MBRLocality,rawat_LRCoptimal, Dimakis_Facebook, Cheng_Azure_FAST, Dimitris_SimpleLRC}. The key discovery of \cite{Gopalan_Locality, Dimakis_LRC, prakash2012optimal} is that, for any $(n,k,d)$ code with locality $r,$ the following bound is satisfied:
\begin{equation} 
d \leq n-k -\lceil k/r\rceil + 2.
\label{eq:Yekhanin_etal}
\end{equation}
\begin{figure}
\center
\centerline{\psfig{figure=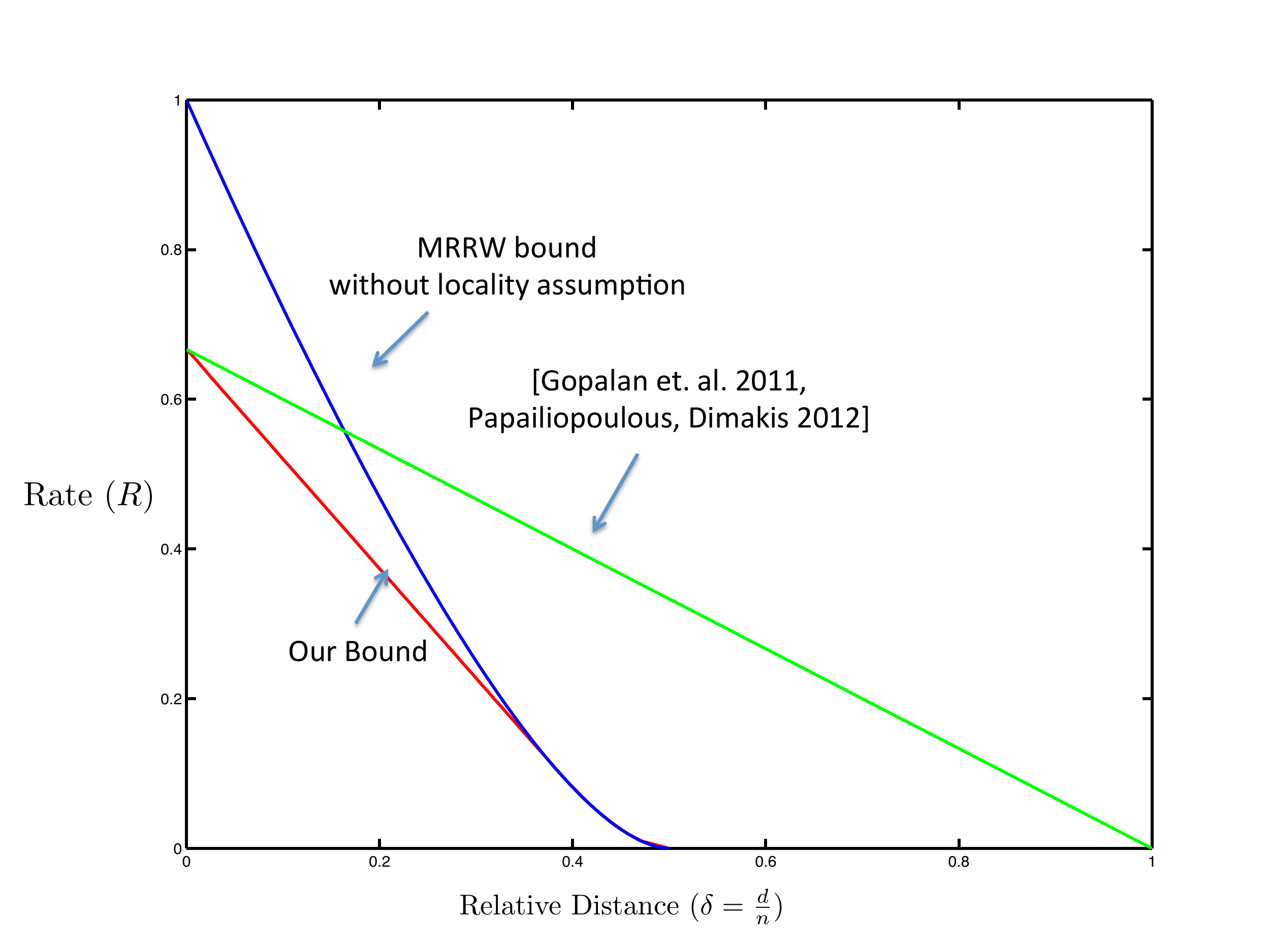, width=3.5 in}}
\caption{A depiction of our bound through the trade-off between the \emph{rate}, $k/n$ and \emph{relative distance}, $d/n,$ for binary codes ($q$=2) for large values of $n,$ with locality $r=2$. The curves plotted are upper bounds on the achievable rates; the plot clearly demonstrates that our upper bound, that uses MRRW bound as a black-box, is better than  the previously known bounds on the rate, for a given relative distance.}
\label{fig:bound}
\end{figure}
\begin{figure}
\center
\centerline{\psfig{figure=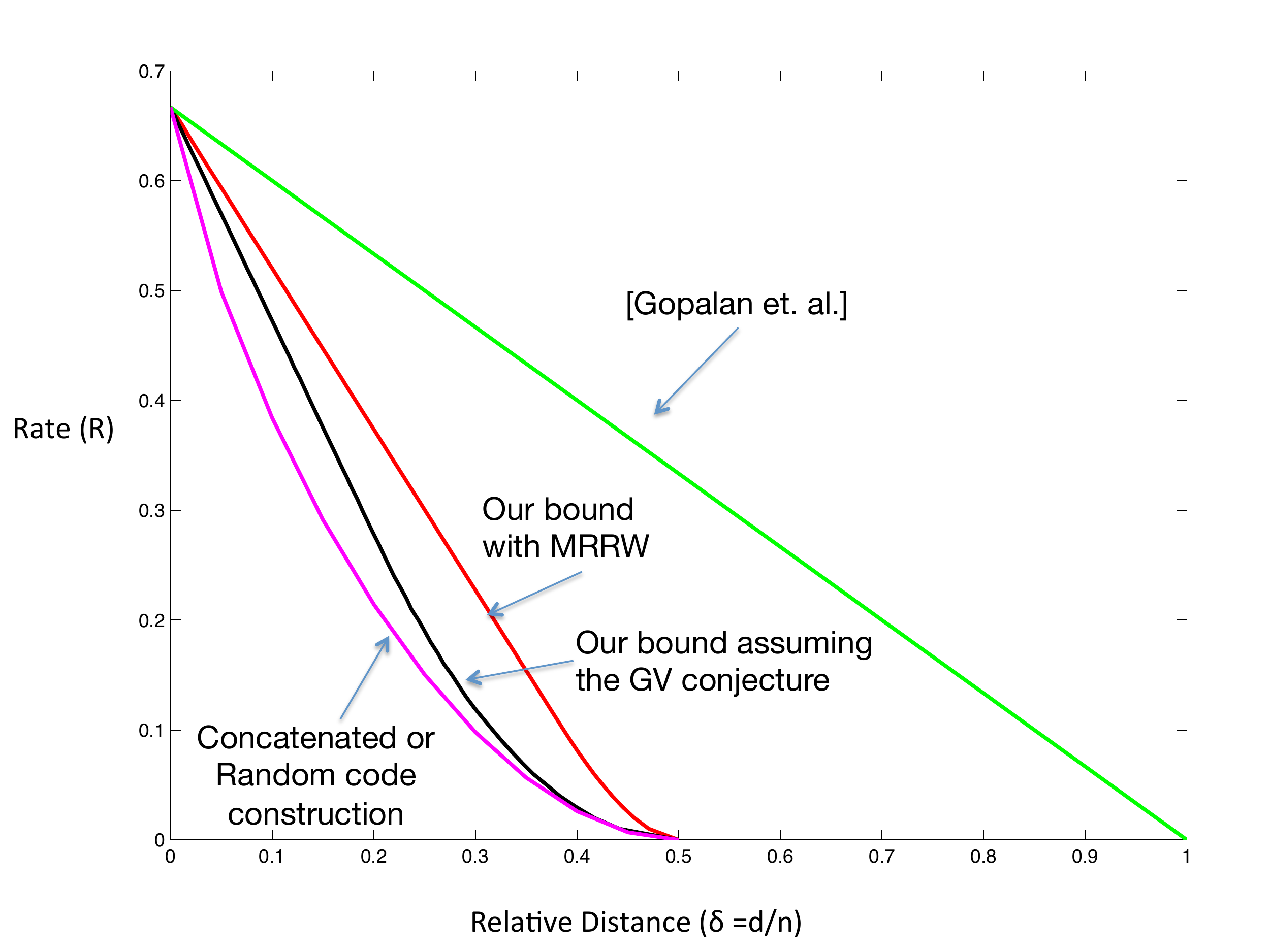, width=3.5 in}}
\caption{A depiction of our achievability result \eqref{eq:achiev} through the trade-off between the \emph{rate}, $k/n$ and \emph{relative distance}, $d/n,$ for binary codes ($q$=2) for large values of $n,$ with locality $r=2$. We compare this achievable rate with our upper bounds: assuming respectively MRRW bound and
the GV bound as the optimal rate for error-correcting codes. If the GV bound were true rate-distance trade-off, then our achievability scheme is very good for large distances. }
\label{fig:bound1}
\end{figure}

The above bound is a generalization of the Singleton bound to include the locality of the codeword, $r$; when $r=k$, the above bound collapses to the classical Singleton bound. In addition, through an invocation of the \emph{multicast} capacity of wireline networks via random network coding, reference \cite{Dimakis_LRC} showed that the bound (\ref{eq:Yekhanin_etal}) is indeed tight for a sufficiently large field size. Intuitively speaking, the bound (\ref{eq:Yekhanin_etal}) implies that there is a \emph{cost} to locality; the smaller the locality, $r,$ the smaller the minimum-distance $d$. Code constructions that achieve the above bound based on Reed-Solomon codes, among other techniques, have been recently discovered in \cite{Dimakis_LRC,VijayKumar_MBRLocality,rawat_LRCoptimal,tamo2013optimal, Sriram_VijayKumar_explicitLRC,huang2007pyramid}\footnote{Recent literature has extended the study of locally recoverable codes to include security constraints \cite{rawat_LRCoptimal}, notions of repair bandwidth \cite{VijayKumar_MBRLocality, Sriram_VijayKumar_explicitLRC}, multiple local repair alternatives \cite{prakash2012optimal}, and probabilistic erasure models \cite{MCW2013} (See \cite{Datta_Oggier_Survey} for a survey). In this work, we concern ourselves with the original notion of locality as described in \cite{Gopalan_Locality,Dimakis_LRC}.}. Missing from these works is a formal study of the impact of an important parameter - the size of the alphabet of the code. Codes over small alphabets are the central subject of  classical coding theory, and are of interest in the application of storage because of their implementation ease. In this paper, we remove the restriction of the large alphabet size from the study of locality of codeword symbols. In particular, we study the impact of the alphabet size on the locality of a code that has a fixed rate.
\subsection{Our Contribution:}
The main contribution of this paper is an upper bound on the minimum distance on the code with a fixed locality that is dependent on the size of the alphabet. While the technical statement of our bound is discussed later (in Theorem \ref{thm:converse}), it is worth noting here that our bound, which is applicable for any feasible alphabet-size and any feasible $(n,k),$ is least as good as the bound of \cite{Gopalan_Locality, Dimakis_LRC} for all parameters. Recall that even in the absence of locality constraints, finding the largest possible minimum distance of a code with a fixed rate over a fixed-size alphabet remains an open problem in general. Our main result uses this quantity - the largest possible minimum distance of an $(n,k)$ code over a given alphabet size - albeit unknown, as a parameter to obtain a bound under locality constraints. As a consequence, our bound is more stringent than the classical (locality-unaware) bounds such as the Mcliece-Rodemich-Rumsey-Welch (MRRW) bounds since they form a special case of our bound (unrestricted locality).  The role of the alphabet size on the rate of the code is highlighted in the plot of Fig. \ref{fig:bound}, where we compare our bounds with existing bounds. Since certain code constructions in previous works are based on multicast codes over networks \cite{Dimakis_LRC}, our result can be interpreted as the demonstration of the impact of alphabet size on the rates of multicast \emph{network codes} for certain networks. Finally, we discuss some achievable constructions in this paper and compare them with our bound. In particular, in Sec. \ref{sec:simplex}, we will show that the family of Simplex codes (dual of Hamming codes) are optimal locally recoverable codes since they meet our new bound. Constructions based on random coding
and concatenated codes are also provided that achieve a rate-distance tradeoff  asymptotically  close to the optimal possible (Section~\ref{sec:achievability}). 
After the initial publication of our work \cite{cadambe_mazumdar_2013}, interesting generalizations of the Simplex code construction of Sec. \ref{sec:simplex} have been studied in \cite{goparaju_calderbank, silberstein_zeh, zeh_yaakobi}. Also worth noting is the elegant construction of an optimal family of  codes by Tamo and Barg \cite{tamo2014family}. Nonetheless, the search for optimal locally recoverable codes remains open even for the binary alphabet. Our randomized construction of Section \ref{sec:achievability} outperform previous code constructions for smaller alphabet sizes including the binary alphabet. 
We end with a discussion on construction based on LDPC-type codes and list decoding.


\subsubsection*{Notation}
Sets are denoted by calligraphic letters and vectors are denoted by bold font. Consider an element $\mathbf{X} \in \mathcal{A}^{n},$ where $\mathcal{A}$ is an arbitrary finite set. The notation $X_{i} \in \mathcal{A}$ denotes the $i$th co-ordinate of the tuple $\mathbf{X}$. For any set $\mathcal{R} \subseteq \{1,2,\ldots,n\}$, the notation $\mathbf{X}_{\mathcal{R}} \in \mathcal{A}^{|\mathcal{R}|}$ denotes the projection of $\mathbf{X} \in \mathcal{A}^{n}$ on to the co-ordinates corresponding to $\mathcal{R.}$
For $\mathbf{X},\mathbf{Y} \in \mathcal{A}^{n}$, the Hamming distance $\Delta_{H}(\mathbf{X},\mathbf{Y})$ is the cardinality of the set $\{m: X_m \neq Y_m\}.$

\section{System Model: Locally Recoverable Codes}

A code $\mathcal{C}$ with length $n$ over alphabet $\mathcal{Q}$
 consists of $|\mathcal{C}|$ codewords denoted as  $$\mathcal{C}=\{\mathbf{X}^{n}(1),\mathbf{X}^{n}(2),\ldots,\mathbf{X}^{n}(|\mathcal{C}|)\},$$ where $\mathbf{X}^{n}(i) \in \mathcal{Q}^{n}, \forall i$. The \emph{dimension of the code}, denoted by $k$ is defined as $k\define \frac{\log |\mathcal{C}|}{\log|\mathcal{Q}|},$ and the rate of the code denoted as $R$ is defined as $R\define\frac{k}{n}$. An $(n,k,d)$-code over $\mathcal{Q}$ is an $n$ length code $\mathcal{C}$ with dimension $k$ such that the {\em minimum distance} is $d$, i.e., with $$d = \min_{\mathbf{X}^{n},\mathbf{Y}^{n}\in \mathcal{C}, 
\mathbf{X}^{n}\ne \mathbf{Y}^{n} } \Delta_{H}\left(\mathbf{X}^{n}, \mathbf{Y}^{n}\right).$$ We refer to $\delta\define \frac{d}{n}$ as the \emph{relative distance} of the code. 

\begin{definition}\label{def:recovery}
An $(n,k,d)$-code is said to be $r$-\emph{locally recoverable} if for every $i$ such that $1 \leq i \leq n,$  there exists a set $\mathcal{R}_i \subset \{1,2,\ldots,n\}\setminus\{i\}$ with $|\mathcal{R}_i|=r$ such that for any two codewords $\mathbf{X},\mathbf{Y}$ satisfying ${X}_i \neq Y_i,$ we have $\mathbf{X}_{\mathcal{R}_i} \neq \mathbf{Y}_{\mathcal{R}_i}$.
\end{definition}

 Informally speaking, this means that an erasure of the $i$th coordinate
  of the codeword can be recovered by accessing the coordinates associated with $\mathcal{R}_{i}$. 
Hence any erased symbol can be recovered  by probing at most  $r$ other coordinates.

 

\remove{\section{Achievability}

Let $k^{(q)}_{\rm opt}(n,d)$ denote the maximum dimension attainable by a $q$-ary linear code
of length $n$ and minimum distance $d$. Also, let $R^{(q)}_{\rm opt}(\delta)$ is the optimal
rate given the relative distance $\delta$ as $n$ tends to $\infty$ (assuming the limit exists,
although this fact is not proven).

\begin{theorem}[Gilbert-Varshamov]
\begin{equation}
R^{(q)}_{\rm opt}(\delta) \ge R^{(q)}_{\rm GV}(\delta) \equiv 1 - \hq(\delta),
\end{equation}
where
\begin{equation}
\hq(x)  = x \log_q(q-1) -x\log_q x -(1-x)\log_q(1-x).
\end{equation}
\end{theorem}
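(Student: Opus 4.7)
The plan is to prove the Gilbert-Varshamov bound via the classical Varshamov construction of a parity-check matrix, and then pass to the asymptotic rate using the $q$-ary entropy function. Throughout I would assume $q$ is a prime power so that $\ff_q$ is a field and the notion of a linear code is meaningful.

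First I would fix $n$ and $d$ and attempt to build an $m \times n$ parity-check matrix $\bfH$ over $\ff_q$ column by column, maintaining the invariant that every collection of $d-1$ already-chosen columns is $\ff_q$-linearly independent. Any such $\bfH$ yields a linear code of length $n$, dimension at least $k = n - m$, and minimum distance at least $d$. Having chosen the first $j-1$ columns, the $j$-th column violates the invariant only if it lies in the $\ff_q$-span of some $\le d-2$ of the previously chosen columns, so the number of forbidden candidates is at most
$$\sum_{i=0}^{d-2}\binom{j-1}{i}(q-1)^i \;\le\; \sum_{i=0}^{d-2}\binom{n-1}{i}(q-1)^i.$$
As long as this quantity is strictly less than $q^m$, a valid new column exists in $\ff_q^m$, so the construction completes provided $q^m > \sum_{i=0}^{d-2}\binom{n-1}{i}(q-1)^i$.

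Next I would convert this finite-length existence condition to an asymptotic rate statement. Setting $d = \lceil \delta n \rceil$ and invoking the standard estimate
$$\frac{1}{n}\log_q \sum_{i=0}^{\lfloor \delta n \rfloor}\binom{n}{i}(q-1)^i \;\longrightarrow\; \hq(\delta) \qquad (n \to \infty),$$
valid for $0 \le \delta \le 1 - 1/q$, one sees that choosing $m = \lceil n(\hq(\delta)+\varepsilon)\rceil$ for any fixed $\varepsilon > 0$ makes the Varshamov condition hold for all sufficiently large $n$. This produces a family of linear codes of rate at least $1 - \hq(\delta) - \varepsilon$ and relative distance at least $\delta$; letting $\varepsilon \downarrow 0$ gives $R^{(q)}_{\rm opt}(\delta) \ge 1 - \hq(\delta)$, as claimed.

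There is no serious obstacle here — the argument is entirely elementary. The one ingredient deserving comment is the entropy estimate for the Hamming-ball volume, which is a routine consequence of Stirling's formula; and the range $\delta > 1 - 1/q$, where that estimate breaks down, requires no separate treatment because $1 - \hq(\delta)$ is non-positive there and the bound is vacuous.
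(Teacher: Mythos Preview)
Your argument is the standard Varshamov parity-check construction and is correct. The paper itself does not supply a proof of this statement: the theorem is quoted as a classical result (and in fact the entire section containing it is wrapped in the paper's \texttt{remove} macro and does not appear in the compiled output), so there is no paper proof to compare against.

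One small correction to your final remark: for $\delta > 1 - 1/q$ the quantity $1 - \hq(\delta)$ is \emph{not} non-positive; $\hq$ attains its maximum value $1$ at $\delta = 1-1/q$ and decreases thereafter (for $q=2$ one has $\hq(1)=0$). What actually happens in that range is that the ball-volume asymptotic you quote fails --- the Hamming ball of radius $\delta n$ fills essentially the whole space, so $\tfrac{1}{n}\log_q \sum_{i\le \delta n}\binom{n}{i}(q-1)^i \to 1$ rather than $\hq(\delta)$ --- and the Varshamov inequality then forces $m \ge n$, yielding only the trivial bound $R \ge 0$. The Gilbert--Varshamov bound as usually stated is understood to apply only for $0 \le \delta \le 1 - 1/q$; for larger $\delta$ the Plotkin bound gives $R_{\rm opt}(\delta)=0$, so the inequality $R_{\rm opt}(\delta) \ge 1-\hq(\delta)$ would in fact be false there.
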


\begin{proposition}
There exists $q$-ary $[n,k,d,r]$-codes with
\begin{equation}
k \ge k^{(q)}_{\rm opt}(n,d) - \Big\lceil \frac{n}{r+1} \Big\rceil.
\end{equation}
\end{proposition}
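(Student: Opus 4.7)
The plan is to construct an $r$-locally recoverable $q$-ary linear code by starting from an optimal $[n, k^{(q)}_{\rm opt}(n,d), d]$ linear code and carving out the subcode defined by $m = \lceil n/(r+1)\rceil$ single-parity-check constraints, one per local repair group. Since each $\ff_q$-linear constraint cuts the dimension by at most one, this immediately produces the claimed bound without doing anything clever.

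Concretely, fix a partition of $\{1,\ldots,n\}$ into $m$ disjoint groups $\mathcal{G}_1,\ldots,\mathcal{G}_m$, each of size at most $r+1$. Let $\mathcal{C}$ be any linear $q$-ary $[n, k^{(q)}_{\rm opt}(n,d), d]$ code, and define the $\ff_q$-linear map $\phi\colon \mathcal{C} \to \ff_q^m$ by
\begin{equation*}
\phi(\mathbf{c}) \;=\; \Bigl( \sum_{j \in \mathcal{G}_1} c_j, \;\ldots,\; \sum_{j \in \mathcal{G}_m} c_j \Bigr).
\end{equation*}
Set $\mathcal{C}' = \ker \phi$. The rank-nullity theorem gives $\dim_{\ff_q} \mathcal{C}' \ge k^{(q)}_{\rm opt}(n,d) - m$, and since $\mathcal{C}' \subseteq \mathcal{C}$, its minimum distance is still at least $d$.

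To verify locality, fix a coordinate $i$ and let $\mathcal{G}_\ell$ be the unique group containing it. Every $\mathbf{c} \in \mathcal{C}'$ satisfies $c_i = -\sum_{j \in \mathcal{G}_\ell \setminus \{i\}} c_j$, so $c_i$ is determined by the at most $r$ symbols indexed by $\mathcal{G}_\ell \setminus \{i\}$. Taking $\mathcal{R}_i$ to be $\mathcal{G}_\ell \setminus \{i\}$, padded with arbitrary other coordinates when $|\mathcal{G}_\ell| - 1 < r$ to reach size exactly $r$, supplies the recovery set demanded by Definition~\ref{def:recovery}. I do not anticipate any real obstacle: the construction is the ``obvious'' one of appending local parities to an optimal code, and the analysis is elementary linear algebra. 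The only mild technicality is that the final group may have fewer than $r+1$ coordinates, which is handled by the padding just described.
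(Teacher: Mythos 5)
Your proof is correct and takes essentially the same route as the paper: starting from an optimal $[n, k^{(q)}_{\rm opt}(n,d), d]$ linear code, append $\lceil n/(r+1)\rceil$ disjoint local single-parity checks (equivalently, pass to the kernel of the group-sum map $\phi$), losing at most one dimension per added check while gaining locality $r$. Your write-up is in fact a bit more careful than the paper's terse version --- by allowing the last group to have size $\le r+1$ and padding $\mathcal{R}_i$ to reach cardinality exactly $r$, you handle the case $(r+1)\nmid n$, which the paper glosses over when it asserts that all added rows can have support of size exactly $r+1$ with disjoint supports.
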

\begin{proof}
We take the parity-check matrix of the optimal code of length
$n$ and distance $d$ and add $\Big\lceil \frac{n}{r+1} \Big\rceil$ rows
to it. All these rows have disjoint support and the size of support of
any row is $r+1$. The loss of dimension of the code is at most 
$\Big\lceil \frac{n}{r+1} \Big\rceil$; however the code has now local recoverability
$r$.
\end{proof}
\begin{corollary}
There exists family of $q$-ary codes with relative minimum distance 
$\delta$ and rate $R^{(q)}_{\rm opt}(\delta) -\epsilon$ that has
local recoverability $O(\frac1\epsilon).$
\end{corollary}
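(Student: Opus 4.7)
The plan is immediate from the Proposition: choose the locality parameter $r$ to depend on $\epsilon$ in such a way that the dimensional loss $\lceil n/(r+1)\rceil$ becomes negligible on the scale of $n$. Concretely, I would set $r = \lceil 1/\epsilon \rceil$ (so $r = O(1/\epsilon)$ and $1/(r+1) \le \epsilon$). For each length $n$ I would apply the Proposition with this locality and with target distance $d = \lceil \delta n\rceil$ to obtain a $q$-ary code of length $n$, locality $r$, minimum distance at least $d$, and dimension $k \ge k^{(q)}_{\rm opt}(n,d) - \lceil n/(r+1)\rceil$.

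Dividing by $n$, the rate of this code satisfies
\[
\frac{k}{n} \;\ge\; \frac{k^{(q)}_{\rm opt}(n,\lceil \delta n \rceil)}{n} \;-\; \frac{1}{n}\Big\lceil \frac{n}{r+1}\Big\rceil.
\]
The subtracted term tends to $1/(r+1) \le \epsilon$ as $n \to \infty$, while the first term tends to $R^{(q)}_{\rm opt}(\delta)$ by definition. Extracting a family of codes indexed by such $n$ therefore produces codes of relative distance $\delta$, locality $O(1/\epsilon)$, and rate at least $R^{(q)}_{\rm opt}(\delta) - \epsilon - o(1)$, which (after absorbing the $o(1)$ term by slightly shrinking $\epsilon$ or by increasing $n$) is what is claimed.

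The only subtlety worth flagging is the parenthetical in the Gilbert--Varshamov statement: $R^{(q)}_{\rm opt}(\delta)$ is defined under the assumption that the limit $\lim_{n\to\infty} k^{(q)}_{\rm opt}(n,\lceil \delta n\rceil)/n$ exists, which is not proven. If we instead interpret $R^{(q)}_{\rm opt}(\delta)$ as a $\limsup$, the argument proceeds unchanged by passing to a subsequence of $n$ along which the $\limsup$ is attained. I do not expect any genuine obstacle; the result is a corollary in the literal sense, amounting to a translation of the Proposition's finite-length bound into the asymptotic rate/locality language.
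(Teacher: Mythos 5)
Your derivation is correct and is exactly the intended one: the paper treats this as an immediate consequence of the preceding Proposition, and your choice $r = \lceil 1/\epsilon\rceil$ (giving $1/(r+1) < \epsilon$) together with dividing the finite-length bound by $n$ and letting $n\to\infty$ is the expected calculation. Your remark about the definition of $R^{(q)}_{\rm opt}$ as a limit (vs.\ $\limsup$) is a fair point, but since the paper itself flags that the limit's existence is assumed rather than proven, your handling of it by passing to a subsequence is entirely adequate.
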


Gallager's existence result regarding LDPC codes gives the following corollary.
\begin{proposition}[Gallager]
There exists family of binary codes with relative minimum distance 
$\delta$ and rate $R^{(2)}_{\rm GV}(\delta) -\epsilon$ that has
local recoverability $O(\log \frac1\epsilon).$
\end{proposition}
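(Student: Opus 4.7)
The plan is to use Gallager's classical analysis of random regular binary LDPC ensembles as a black box. Recall that a $(c,r+1)$-regular LDPC code over $\ff_2$ has a parity-check matrix in which every column has exactly $c$ nonzero entries and every row has exactly $r+1$ nonzero entries. Because every coordinate participates in at least one parity-check equation of weight $r+1$, such a code is automatically $r$-locally recoverable: for each coordinate $i$, fix any row $\bfh$ of the parity-check matrix whose support contains $i$, take $\cR_i$ to be the remaining $r$ indices in $\supp(\bfh)$, and recover the erased symbol $X_i$ as the $\ff_2$-sum of the coordinates indexed by $\cR_i$. This satisfies Definition~\ref{def:recovery}, so the locality of any code in the ensemble is bounded by $r$.

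Next, I would invoke Gallager's theorem for the $(c,r+1)$-regular ensemble: with the design rate $R = 1 - c/(r+1)$ held fixed, as $c$ and $r$ grow the typical relative minimum distance of a random code in the ensemble approaches the value $\delta$ satisfying $R = R^{(2)}_{\rm GV}(\delta)$. The key quantitative fact, which distinguishes this approach from the random parity-check construction of the preceding proposition, is that the gap to the GV curve decays \emph{exponentially} in the check-degree: there is an absolute constant $K$ such that at relative distance $\delta$, the typical rate of a $(c,r+1)$-regular ensemble is at least $R^{(2)}_{\rm GV}(\delta) - 2^{-(r+1)/K}$. This exponential correction is a standard consequence of Gallager's asymptotic weight-enumerator expression for the ensemble.

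Combining these two observations, I would choose $r$ to be the smallest integer with $2^{-(r+1)/K}\le\epsilon$, so that $r=O(\log(1/\epsilon))$, and then set $c = \lceil (1-R)(r+1)\rceil$ with $R = R^{(2)}_{\rm GV}(\delta)-\epsilon$. Gallager's theorem then guarantees that for all sufficiently large $n$ the ensemble contains a binary code of length $n$ with relative minimum distance at least $\delta$ and rate at least $R^{(2)}_{\rm GV}(\delta)-\epsilon$. By the parity-check argument above, this code has local recoverability $r=O(\log(1/\epsilon))$, completing the proof.

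The main obstacle is quantifying the speed of convergence in Gallager's bound rigorously; without the \emph{exponential} decay in the row weight, one would recover only the weaker $O(1/\epsilon)$ locality already provided by the previous proposition. Once the exponential rate is extracted (which can be cited directly from Gallager's monograph on low-density parity-check codes), the rest of the argument is simply the observation that a regular sparse parity-check matrix defines, row by row, the recovery sets $\cR_i$ demanded by Definition~\ref{def:recovery}.
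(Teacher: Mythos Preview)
Your proposal is correct and matches the paper's approach: the paper simply states this proposition as a direct corollary of Gallager's existence result for regular LDPC ensembles, without spelling out any further details, and your argument supplies exactly the mechanism the paper has in mind --- each row of weight $r+1$ in the parity-check matrix furnishes the recovery set, and Gallager's ensemble analysis gives the exponential-in-$r$ approach to the GV curve that converts an $\epsilon$ rate gap into $r=O(\log(1/\epsilon))$ locality. There is nothing to compare beyond this, since the paper offers no proof of its own and relies on the same black box you invoke.
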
}

\section{Bound on Minimum Distance for Local Recovery}
Given parameters $n,d,q,$ let
$$k_{\text{opt}}^{(q)}(n,d) = \max \frac{\log|\mathcal{C}|}{\log q},$$
where the maximization is over all possible $n$-length codebooks $\mathcal{C}$ with minimum distance $d$, over some alphabet $\mathcal{Q}$ where $|\mathcal{Q}|=q$. Informally speaking, $k_{\text{opt}}^{(q)}(n,d)$ is the largest possible dimension of an $n$-length code, for a given alphabet size $q$ and a given minimum distance $d$. The determination of $k_{\text{opt}}^{(q)}$  is a classical open problem in coding theory. We also know that $k_\text{opt}$ satisfies the Singleton bound:
$$ k^{(q)}_{\text{opt}}(n,d) \leq n-d+1, \forall q \in \mathbb{Z}_{+}.$$
References \cite{Gopalan_Locality, Dimakis_LRC}, generalized the above bound under locality constraints as Def. \ref{def:recovery}. However, it is well known that the Singleton bound is not tight in general, especially for small values of $q$.  The goal of this paper is to derive a bound on the dimension of an $r$-locally recoverable code in terms of $k_{\text{opt}}^{(q)}.$ Our main result is the following.
\begin{theorem}
For any $(n,k,d)$ code over $\mathcal{Q}$ that is $r$-locally recoverable
\begin{equation}\label{eq:conv}
k \le \min_{ t \in \integers_{+}} \Big[ tr + k^{(q)}_{\rm opt}(n  - t(r+1),d) \Big] ,
\end{equation}
\label{thm:converse}
where $q=|\mathcal{Q}|.$
\end{theorem}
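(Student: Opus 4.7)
The plan is to convert $\mathcal{C}$ into an auxiliary code of length $n-t(r+1)$ whose minimum distance is at least $d$ and whose size is at least $|\mathcal{C}|/q^{tr}$; invoking the definition of $k^{(q)}_{\rm opt}$ on this auxiliary code then yields the claimed inequality for the chosen $t$, and the outer minimum is taken over all feasible $t$.

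First I would select $t$ distinct coordinates $i_1,\ldots,i_t\in\{1,\ldots,n\}$ together with their repair groups $G_j\define\{i_j\}\cup\mathcal{R}_{i_j}$, chosen greedily so that the $G_j$'s are pairwise disjoint; then $G\define G_1\cup\cdots\cup G_t$ has size exactly $t(r+1)$, and in particular no $i_j$ belongs to any $\mathcal{R}_{i_{j'}}$. Next I would bound the number of possible projections of codewords onto $G$: by Definition~\ref{def:recovery}, the symbol at each $i_j$ in any codeword is a deterministic function of its symbols on $\mathcal{R}_{i_j}\subseteq G\setminus\{i_1,\ldots,i_t\}$, so specifying the $tr$ symbols on $G\setminus\{i_1,\ldots,i_t\}$ pins down the entire projection onto $G$. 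Hence the projection of $\mathcal{C}$ onto $G$ has at most $q^{tr}$ elements.

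The third step is a pigeonhole argument on the projection map $\mathcal{C}\to\mathcal{Q}^G$: some fiber $\mathcal{C}^\star\subseteq\mathcal{C}$ contains at least $|\mathcal{C}|/q^{tr}$ codewords, all sharing the same projection on $G$. Restricting $\mathcal{C}^\star$ to the complement $\bar G\define\{1,\ldots,n\}\setminus G$, of size $n-t(r+1)$, is therefore injective (distinct codewords must differ somewhere, and they agree on $G$) and preserves the minimum distance: any two distinct codewords in $\mathcal{C}^\star$ are at Hamming distance at least $d$ in the full code and agree on $G$, so their projections on $\bar G$ differ in at least $d$ positions. Thus the restricted code has length $n-t(r+1)$, size at least $|\mathcal{C}|/q^{tr}$, and minimum distance at least $d$. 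The definition of $k^{(q)}_{\rm opt}$ then gives $|\mathcal{C}|/q^{tr}\le q^{k^{(q)}_{\rm opt}(n-t(r+1),d)}$, i.e.\ $k\le tr+k^{(q)}_{\rm opt}(n-t(r+1),d)$; minimizing over $t$ yields the desired bound.

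The main obstacle is the combinatorial selection in the first step. The naive greedy that chooses $i_j$ outside $G_1\cup\cdots\cup G_{j-1}$ does not always guarantee that $G_j$ is disjoint from the earlier groups, because $\mathcal{R}_{i_j}$ may incidentally intersect them. I would handle this by running the same argument with the (possibly smaller) union $G'=G_1\cup\cdots\cup G_t$ of size $|G'|\le t(r+1)$, where the anchors $i_1,\ldots,i_t$ are chosen so as to still lie outside every $\mathcal{R}_{i_{j'}}$. This yields the weaker inequality $k\le(|G'|-t)+k^{(q)}_{\rm opt}(n-|G'|,d)$, which is then closed to the stated form using the shortening monotonicity $k^{(q)}_{\rm opt}(n'+m,d)\le k^{(q)}_{\rm opt}(n',d)+m$ (a standard consequence of puncturing an optimal length-$(n'+m)$ code).
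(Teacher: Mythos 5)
Your overall strategy mirrors the paper's: pick $t$ anchors and their repair groups, show the projection onto the union of groups takes at most $q^{tr}$ values, pass to the largest fiber to obtain a length-$(n-t(r+1))$ code with minimum distance at least $d$ and size at least $|\mathcal{C}|/q^{tr}$, and invoke the definition of $k^{(q)}_{\rm opt}$. The fiber argument and the shortening monotonicity $k^{(q)}_{\rm opt}(n'+m,d)\le k^{(q)}_{\rm opt}(n',d)+m$ (which is shortening, not puncturing, but correct) are both sound and are exactly the content of the paper's Lemma~\ref{lem:2}.

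The gap is in the anchor-selection step, and your own fallback does not close it. Your counting of projections uses a \emph{simultaneous} determination: ``specifying the $tr$ symbols on $G\setminus\{i_1,\ldots,i_t\}$ pins down the entire projection,'' which requires $\mathcal{R}_{i_j}\cap\{i_1,\ldots,i_t\}=\emptyset$ for every $j$, i.e.\ no anchor belongs to any other anchor's repair set. This is an independent-set condition in the graph with $i\sim j$ whenever $i\in\mathcal{R}_j$ or $j\in\mathcal{R}_i$, and a greedy choice does not obviously produce $t$ such anchors in the relevant range: the forbidden set when picking $i_j$ must include \emph{all} coordinates $\ell$ with $\mathcal{R}_\ell\ni i_{j'}$ for some $j'<j$, and the ``in-degree'' of the repair relation is not bounded by $r$, so this forbidden set can be arbitrarily large. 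Your fallback explicitly re-imposes the same pairwise condition (``chosen so as to still lie outside every $\mathcal{R}_{i_{j'}}$''), so it inherits the same problem. The paper's Lemma~\ref{lem:1} sidesteps this by requiring only the \emph{one-sided} condition $a_m\notin\mathcal{R}_{a_\ell}$ for $\ell<m$, which is always satisfiable greedily (the forbidden set has size at most $(m-1)(r+1)$), and then establishing $H(\mathcal{I})\le tr$ by a peeling/contradiction argument: if two codewords agree off the anchors but differ on a nonempty anchor subset $\mathcal{B}$, then $\mathcal{R}_i\cap\mathcal{B}\ne\emptyset$ for each $i\in\mathcal{B}$; deleting the largest-indexed element of $\mathcal{B}$ (which by the one-sided condition lies in no earlier anchor's repair set) preserves this property, and iterating gives $\mathcal{R}_{a}\cap\{a\}\ne\emptyset$ for a single anchor $a$, a contradiction. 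Equivalently, the anchor values can be recovered \emph{sequentially} in index order even when an earlier anchor lies in a later anchor's repair set. To make your proof rigorous, replace the simultaneous determination by this sequential/peeling argument; then only the one-sided condition is needed and the rest of your argument (fiber extraction, shortening monotonicity to absorb $|G'|<t(r+1)$) goes through unchanged.
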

Our bound applies to general (including non-linear) codes, as opposed to only linear codes. Note that, the minimizing value of $t$ in \eqref{eq:conv}, $t^{\ast}$, must satisfy,
$$
t^{\ast} \le \min \Big\{\Big\lceil \frac{n}{r+1}\Big\rceil,\Big\lceil\frac{k}{r}\Big\rceil\Big\}.
$$ 
This is true because,  1) for $t \ge \Big\lceil \frac{n}{r+1}\Big\rceil$, the objective function of the
optimization of \eqref{eq:conv} becomes linearly growing with $t$; 2) for $t \ge \Big\lceil\frac{k}{r}\Big\rceil$,
the right hand side of \eqref{eq:conv} is greater than $k$.

The bound of \cite{Gopalan_Locality, Dimakis_LRC}, i.e. \eqref{eq:Yekhanin_etal}, is weaker than the  bound of Theorem \ref{thm:converse}. To prove this claim, let us show that, if a $(n,k,d,r)$-tuple does not satisfy  \eqref{eq:Yekhanin_etal}, then it will not satisfy \eqref{eq:conv}.

If possible, let the tuple $(n,k,d,r)$ violate \eqref{eq:Yekhanin_etal}, i.e., let
$$
d > n-k -\lceil k/r\rceil + 2.
$$
This sets the following chain of implications.
\begin{align*}
&\min_{ t \in \integers_{+}} \Big[ tr + k^{(q)}_{\rm opt}(n  - t(r+1),d) \Big]\\
&\le  \lfloor (k-1)/r\rfloor r + \max\{n  -  \lfloor (k-1)/r\rfloor (r+1)-d+1,0\}   \\
&=  \max\{n  - \lfloor (k-1)/r\rfloor-d+1, \lfloor (k-1)/r\rfloor r\} \\
&<   \max\{n  - \lfloor (k-1)/r\rfloor - n+k+\lceil k/r\rceil-2 +1,k\}\\
&=\max\{ - \lfloor (k-1)/r\rfloor +k+\lceil k/r\rceil-1,k\}\\
&=k,
\end{align*}
which means \eqref{eq:conv} is not satisfied by this tuple as well.

Notice that the above chain of implications came from plugging in the Singleton bound on $k_{\text{opt}}^{(q)}.$ We shall apply bounds that are dependent on $q$ and stronger than the Singleton bound on $k_{\text{opt}}^{(q)}$ to effectively obtain tighter bounds on (\ref{eq:Yekhanin_etal}) later in this paper. We shall first present an overview of the proof of Theorem \ref{thm:converse}.
For purposes of the proof, for a given $n$ length code $\mathcal{C}$ we define the function $H(.)$ as follows
$$H(\mathcal{I})= \frac{\log|\{\mathbf{X}_{\mathcal{I}}: \mathbf{X} \in \mathcal{C}\}|}{\log|\mathcal{Q}|},$$
for any set $\mathcal{I} \subseteq \{1,2,\ldots,n\}$.
\begin{remark}
In the language used in \cite{Dimakis_LRC}, $H(\mathcal{I})$ would denote the ``entropy'' associated with $\mathbf{X}_\mathcal{I}.$ Here, the above definition is appropriate since our modeling is adversarial, i.e., we do not presuppose any distribution on the messages or the codebook (see, \cite{MCW2013} where such assumptions have been made). However, the behavior of the function $H(.)$ is similar to the entropy function; for instance it satisfies submodularity, i.e., $H(\mathcal{I}_{1})+H(\mathcal{I}_{2}) \geq H(\mathcal{I}_{1} \cup \mathcal{I}_{2})+H(\mathcal{I}_{1} \cap \mathcal{I}_{2})$
\end{remark}

Theorem \ref{thm:converse} follows from Lemma \ref{lem:1} and Lemma \ref{lem:2} stated next. 
\begin{lemma}
\label{lem:1}
Consider an $(n,k,d)$-code over alphabet $\mathcal{Q}$ that is $r$-locally recoverable. Then, $\forall~1\leq t \leq k/r, t \in \mathbb{Z}$ there exists a set 
$ ~\mathcal{I} \subseteq \{1,2,\ldots,n\}, |\mathcal{I}| = t(r+1)$ such that ${H}(\mathcal{I}) \leq tr.$ 
\end{lemma}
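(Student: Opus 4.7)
The plan is to build the required set $\mathcal{I}$ by a greedy iterative procedure, growing it by one ``locality group'' $\{i\} \cup \mathcal{R}_i$ per step. The central observation from Definition \ref{def:recovery} is that the symbol $X_i$ is a deterministic function of $\mathbf{X}_{\mathcal{R}_i}$: once $\mathcal{R}_i$ is in the set, appending $i$ itself enlarges the set by $1$ but leaves $H$ unchanged. Each iteration therefore widens the gap between $|\mathcal{I}|$ and $H(\mathcal{I})$ by at least one, which is exactly the $t$-unit slack we need after $t$ iterations.

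More precisely, I would set $\mathcal{I}_0 = \emptyset$, and for each $s = 0, 1, \ldots, t-1$ pick any coordinate $i_{s+1} \in \{1,\ldots,n\} \setminus \mathcal{I}_s$ and let $\mathcal{I}_{s+1} = \mathcal{I}_s \cup \{i_{s+1}\} \cup \mathcal{R}_{i_{s+1}}$. Because $X_{i_{s+1}}$ is a function of $\mathbf{X}_{\mathcal{R}_{i_{s+1}}} \subseteq \mathbf{X}_{\mathcal{I}_{s+1}}$,
\[
H(\mathcal{I}_{s+1}) \;=\; H(\mathcal{I}_s \cup \mathcal{R}_{i_{s+1}}) \;\le\; H(\mathcal{I}_s) + |\mathcal{R}_{i_{s+1}} \setminus \mathcal{I}_s|,
\]
where the last inequality comes from adjoining the new coordinates of $\mathcal{R}_{i_{s+1}}$ one at a time and the fact, immediate from the submodularity property stated in the remark, that $H$ grows by at most $1$ per new coordinate. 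On the other hand, since $i_{s+1} \notin \mathcal{I}_s$ and $i_{s+1} \notin \mathcal{R}_{i_{s+1}}$,
\[
|\mathcal{I}_{s+1}| - |\mathcal{I}_s| \;=\; 1 + |\mathcal{R}_{i_{s+1}} \setminus \mathcal{I}_s|,
\]
so combining the two gives $H(\mathcal{I}_{s+1}) - H(\mathcal{I}_s) \le (|\mathcal{I}_{s+1}| - |\mathcal{I}_s|) - 1$. Telescoping over $s = 0, \ldots, t-1$ yields $H(\mathcal{I}_t) \le |\mathcal{I}_t| - t$, and since each step adds at most $r+1$ coordinates we also have $|\mathcal{I}_t| \le t(r+1)$.

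If $|\mathcal{I}_t| = t(r+1)$ exactly we are done; otherwise I would pad $\mathcal{I}_t$ with $t(r+1) - |\mathcal{I}_t|$ arbitrary extra coordinates to produce $\mathcal{I}$ of size exactly $t(r+1)$. Because each padded coordinate raises $H$ by at most $1$,
\[
H(\mathcal{I}) \;\le\; H(\mathcal{I}_t) + \bigl(t(r+1) - |\mathcal{I}_t|\bigr) \;\le\; \bigl(|\mathcal{I}_t| - t\bigr) + \bigl(t(r+1) - |\mathcal{I}_t|\bigr) \;=\; tr,
\]
as required.

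The only real obstacle I anticipate is keeping the iteration alive: at each step one needs a fresh coordinate $i_{s+1} \notin \mathcal{I}_s$, and at the end at least $t(r+1)$ coordinates in total for the padding. Both requirements reduce to the budget $H(\mathcal{I}_s) \le sr < tr \le k$ granted by the hypothesis $t \le k/r$: this strict inequality rules out $\mathcal{I}_s = \{1,\ldots,n\}$, which would otherwise force $H(\mathcal{I}_s) = k$. Finally, if $t(r+1) > n$ then no subset of $\{1,\ldots,n\}$ of size $t(r+1)$ exists and the lemma is vacuous, so we may assume $t(r+1) \le n$ and pad freely.
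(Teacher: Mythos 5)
Your proof is correct, and it follows the same greedy construction as the paper (accumulate locality groups $\{i\}\cup\mathcal{R}_i$ one at a time, choosing a fresh $i$ each step, then pad to exact size), but your argument for the bound $H(\mathcal{I})\le tr$ is genuinely cleaner. The paper proves the global statement $H(\mathcal{I}) = H(\mathcal{I}\setminus\{a_1,\ldots,a_t\})$ all at once by exhibiting a bijection, and verifying injectivity requires a contradiction argument: assume two codewords agree on $\mathcal{I}\setminus\{a_1,\ldots,a_t\}$ but differ on some of the $a_i$'s, let $\mathcal{B}$ be the disagreement set, observe that locality forces $\mathcal{R}_i\cap\mathcal{B}\ne\emptyset$ for all $i\in\mathcal{B}$, and then ``peel off'' the $a_m$'s in reverse order of selection (exploiting $a_m\notin\bigcup_{l<m}\mathcal{R}_{a_l}$) until a self-referential absurdity $\mathcal{R}_{a_1}\cap\{a_1\}\ne\emptyset$ appears. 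You instead observe at each individual step $s$ that, since $\mathcal{R}_{i_{s+1}}\subseteq\mathcal{I}_{s+1}$, the coordinate $i_{s+1}$ is a deterministic function of the rest (so appending it adds $1$ to cardinality but $0$ to $H$), giving the per-step slack $H(\mathcal{I}_{s+1})-H(\mathcal{I}_s)\le(|\mathcal{I}_{s+1}|-|\mathcal{I}_s|)-1$, and then telescope. This avoids the contradiction-and-peeling machinery and the attendant bookkeeping of the disagreement set $\mathcal{B}$; both proofs rely on the same structural fact (each $i_{s+1}$ is chosen outside $\mathcal{I}_s$), but yours uses it to guarantee the fresh coordinate exists, not to drive a contradiction. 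Your observations about keeping the iteration alive ($H(\mathcal{I}_s)\le sr<k$ so $\mathcal{I}_s\subsetneq\{1,\ldots,n\}$) and the vacuous case $t(r+1)>n$ are also correctly handled and make the argument self-contained.
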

\begin{lemma} Consider an $(n,k,d)$-code over $\mathcal{Q}$ where there exists a set $\mathcal{I}\in\{1,2,\ldots,n\}$ such that $H({\mathcal{I}})\leq m.$ Then there exists a $(n-|\mathcal{I}|,(k-m)^{+},d)$ code over $\mathcal{Q}$.
\label{lem:2}
\end{lemma}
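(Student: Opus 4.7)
The plan is to use a pigeonhole-plus-projection argument. First I note that $|\mathcal{C}|=q^k$ by the definition of dimension, while by hypothesis the set of projections $\{\mathbf{X}_{\mathcal{I}}:\mathbf{X}\in\mathcal{C}\}$ has size at most $q^m$. Averaging over the fibers of the projection map onto $\mathcal{I}$, there must exist some $\mathbf{y}\in\mathcal{Q}^{|\mathcal{I}|}$ such that the fiber
\[
\mathcal{C}'\define\{\mathbf{X}\in\mathcal{C}:\mathbf{X}_{\mathcal{I}}=\mathbf{y}\}
\]
has cardinality at least $q^{k}/q^{m}=q^{k-m}$.

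Next, I would build the claimed shorter code by deleting the coordinates in $\mathcal{I}$. Setting $\mathcal{J}\define\{1,2,\ldots,n\}\setminus\mathcal{I}$, I define $\mathcal{C}''=\{\mathbf{X}_{\mathcal{J}}:\mathbf{X}\in\mathcal{C}'\}$, which is a code of length $n-|\mathcal{I}|$ over $\mathcal{Q}$. Because every element of $\mathcal{C}'$ agrees with $\mathbf{y}$ on $\mathcal{I}$, the restriction map $\mathbf{X}\mapsto\mathbf{X}_{\mathcal{J}}$ is injective on $\mathcal{C}'$, so $|\mathcal{C}''|=|\mathcal{C}'|\ge q^{k-m}$. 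This immediately gives $\mathcal{C}''$ dimension at least $k-m$.

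For the minimum distance, I would argue that for any two distinct $\mathbf{X},\mathbf{X}'\in\mathcal{C}'$ the contribution of $\mathcal{I}$-coordinates to $\Delta_H(\mathbf{X},\mathbf{X}')$ is zero, so the entire Hamming distance is supported on $\mathcal{J}$. Therefore $\Delta_H(\mathbf{X}_{\mathcal{J}},\mathbf{X}'_{\mathcal{J}})=\Delta_H(\mathbf{X},\mathbf{X}')\ge d$, and the minimum distance of $\mathcal{C}''$ is at least $d$. The degenerate case $k\le m$, where $(k-m)^{+}=0$, is handled trivially: a length-$(n-|\mathcal{I}|)$ code of a single codeword satisfies the claim vacuously.

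I do not expect a real obstacle here; the argument is essentially a one-line pigeonhole on fibers followed by the observation that projection away from a set of coordinates does not decrease Hamming distance when restricted to codewords that already agree on those coordinates. The only step that warrants explicit care is confirming $|\mathcal{C}'|\ge q^{k-m}$ when $k$ and $m$ need not be integers, but this follows directly from the identity $|\mathcal{C}|=q^k$ and the assumed bound $|\{\mathbf{X}_{\mathcal{I}}:\mathbf{X}\in\mathcal{C}\}|\le q^m$.
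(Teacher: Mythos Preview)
Your proposal is correct and essentially identical to the paper's proof: both use the same averaging/pigeonhole argument over the fibers of the projection onto $\mathcal{I}$ to find a large fiber, then puncture the coordinates in $\mathcal{I}$ and observe that the minimum distance is preserved because all codewords in the fiber agree on $\mathcal{I}$. Your treatment is slightly more explicit (noting injectivity of the restriction and handling the $(k-m)^+$ degenerate case), but the core argument matches.
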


The above lemmas are proved in the appendix.



\vspace{0.2in}

\section{Applications of Thm.~\ref{thm:converse}}
In this section, we apply classical bounds for $k_{\text{opt}}$ to Theorem \ref{thm:converse}. To enable a clean analysis, we look at the regime where $n \to \infty.$ In particular we set $R=k/n, \delta = d/n$ and obtain bounds on the trade-off between $(R,\delta)$ as $r$ is fixed and $n \to \infty$. We first apply the Plotkin bound on $k_{\text{opt}}$ and obtain an analytical characterization of the $(R,\delta)$ trade-off with dependence on the alphabet-size, $q$; in particular, we demonstrate a \emph{distance-expansion} penalty as a result of the limit on alphabet size. To obtain a tighter locality-aware bound, we then use the MRRW bound for $k_{\text{opt}}$ to numerically obtain the plot of Fig. \ref{fig:bound}.

To begin, observe that dividing the Singleton bound $n$ and letting $n \to \infty$, it can be written as 
$$ R \leq 1 - \delta+o(1)$$
Similarly, the bound of \cite{Gopalan_Locality, Dimakis_LRC} can be written as:
$$ \delta \leq 1 - \frac{rR}{r+1}+o(1).$$
\begin{equation} \Rightarrow R \leq \frac{r}{r+1}(1 - \delta)+o(1)
\label{eq:1}
\end{equation}
The plot of the above bound is placed in Fig. \ref{fig:bound} for $r=2$. The \emph{cost} of the locality limit above therefore is the factor of $r/(r+1)$ over the Singleton bound. We are now ready to analyze the Plotkin Bound, adapted to Theorem \ref{thm:converse}.
\subsection*{Application of Plotkin Bound - Distance Expansion Penalty}
Let us choose $t = \frac{1}{r+1}(n-\frac{d}{1-1/q})$ in Theorem \ref{thm:converse}. We have,
for any $(n,k,d)$-code that is $r$-locally recoverable,
$$
k \le  \frac{r}{r+1}\Big(n-\frac{d}{1-1/q}\Big) + k^{(q)}_{\rm opt}\Big(\frac{d}{1-1/q},d\Big)
$$
 It is known,  from the Plotkin bound, $k^{(q)}_{\rm opt}\Big(\frac{d}{1-1/q},d\Big) \le \log_q \frac{2qd}{1-1/q}.$
 See, for example, Sec. 2\S2 of MacWilliams and Sloane \cite{MS1977}, for a proof of this result for
 $q=2$, which can be easily extended for larger alphabets.
 Hence,
\begin{equation}\label{eq:plotkin}
k \le \frac{r}{r+1}\Big(n-\frac{d}{1-1/q}\Big) + \log_q \frac{2qd}{1-1/q}.
\end{equation}
Generally, this bound is better than \eqref{eq:Yekhanin_etal}. Notice that dividing the above by $n$ and taking $n \to \infty,$ we have
$$
R = \frac{k}{n} \le  \frac{r}{r+1}\Big(1-\frac{\delta}{1-1/q}\Big) +o(1),
$$
whereas, 
observing the above, it can be noted that the effect of restricting $q$ leads to a distance-expansion penalty of $\frac{1}{1-1/q}$, since the above bound is tantamount to shooting for a distance of $\delta/(1-1/q)$ w.r.t. \eqref{eq:1}. 
\subsection*{Beyond the Plotkin bound}
Recall that the MRRW bound is the tightest known bound for the rate-distance tradeoff in absence of locality constraints. We briefly describe an application of this bound for Theorem \ref{thm:converse}, i.e., when the locality is restricted to be equal to a number $r$; it is this bound that is plotted in Fig. \ref{fig:bound}. We restrict our attention to binary codes ($q=2$) and therefore the dependence on $q$ is dropped in the notation.

Define $R_{\text{opt}}(\delta) \define \lim_{n \to \infty} \frac{k_{\text{opt}}(n,\delta n)}{n} $ Dividing the bound of Theorem \ref{thm:converse} by $n$ we can get, as $n \to \infty$,
\begin{eqnarray} R \leq \min_{0 \leq x \leq r/(r+1)} x + \left(1-x(1+1/{r})\right)R_{\text{opt}}\left(\frac{\delta}{1-x(1+1/r)}\right) \label{eq:asymptoticonverse}\end{eqnarray}
where $x = tr/n$.
It is instructive to observe that, setting $x=0$ above yields classical (locality-unaware) bounds. Setting $x=R$ above and writing out the Singleton bound for $R_{\text{opt}}$ yields the bound of (\ref{eq:1}). Therefore the above bound is superior to all the classical (locality-unaware) bounds on $R(\delta)$ and the bound of (\ref{eq:1}) since these are special cases.  
Using the MRRW bound $R(y) \leq \mathrm{H}_{2}(0.5 - \sqrt{y(1-y)})+o(1)$ (where $\hq(x) = x\log_q(q-1) -x\log_q x -(1-x)\log_q(1-x)$ represents the $q$-ary entropy function), and numerically solving the optimization problem above (in a brute-force manner) yields our bounds for the rate-distance trade-offs for any given $r$. Deriving analytical insights for the optimization problem by application of bounds beyond the Plotkin bound is an area of future work.
\begin{remark}
While the MRRW bound is the best known upper bound on the rate given a relative distance, for binary codes, the best known achievable scheme, asymptotically as $n \to \infty,$ is given by the Gilbert-Varshamov (GV) Bound. Indeed, it is a folklore conjecture in coding theory that the GV bound is the best achievable rate for binary code, asymptotically as the blocklength tends to infinity. Therefore, to evaluate the merit of binary locally recoverable code constructions, the use of the GV bound for the function $R_{\text{opt}}$ in (\ref{eq:asymptoticonverse}) has operational meaning.
\end{remark}

\section{Simplex codes and tightness of Thm.~\ref{thm:converse}}
\label{sec:simplex}
For alphabet size exponential in the blocklength the bound of \eqref{eq:Yekhanin_etal} has been shown to be achievable in
\cite{rawat_LRCoptimal,tamo2013optimal} by constructing explicit codes. Furthermore, recently, \cite{tamo2014family} has shown that an alphabet that is linear in the blocklength suffices to achieve the bound of \ref{eq:Yekhanin_etal}. Hence  \eqref{eq:conv} is tight for  large alphabets.
We will show that this bound is also achievable for small, in particular binary, alphabets by giving an example of explicit family of codes
where \eqref{eq:conv} is met with equality. The family of codes is $[2^m-1, m, 2^{m-1}]$ Simplex code, $m \in \integers_{+}$.

First, we derive, according to Thm.~\ref{thm:converse}, the best possible locality a code with the parameters of
Simplex code can have.  Here, $n \equiv 2^m-1,\, k \equiv m = \log_2(n+1), \, d \equiv 2^{m-1} = \frac{n+1}2.$ 
We use $t = 2$, which satisfies 
$$
t \le \min \Big\{\Big\lceil \frac{n}{r+1}\Big\rceil,\Big\lceil\frac{k}{r}\Big\rceil\Big\},
$$ 
as will be clear next.
With this value and using Plotkin bound \cite[Sec.~2\S2]{MS1977}:
\begin{align*}
k=m  &\le 2r +k_{\rm opt}(2^m-1 - 2(r+1),2^{m-1})\\
&\le 2r+ \log_2 \frac{2\cdot2^{m-1}}{2^m-2^m+1+2(r+1)}\\
& =2r+m - \log_2 (2r+3) .
\end{align*}
That is, 
$$
2r \ge \log_2 (2r+3) \Rightarrow r \ge 2.
$$
Hence according to Thm.~\ref{thm:converse}, the
best possible locality with the parameters of Simplex code is $2$.

Next, we show that the Simplex code indeed has locality $2$. This is shown by constructing a
parity-check matrix of Simplex code, with every row having exactly $3$ ones. Recall, the dual code of Simplex code
is a $[2^m-1, 2^m-1-m, 3]$-Hamming code. We give a generator matrix of Hamming code
that has only $3$ ones per row.  Let us index the columns of the generator matrix by $1, 2, \dots, 2^m-1$, and use
the notation $(i,j,k)$ to denote the vector with exactly three $1$'s, located at
positions $i, j,$ and $k$. Then, the Hamming code has a generator matrix given by
the row vectors
$(i, 2^j, i+2^j)$ for $1\le j \le m-1, 1\le i < 2^j.$

This gives the first example of a family of algebraic codes that are
optimal in terms of local repairability.

\section{Achievability bounds and constructions}
\label{sec:achievability}
So far in this paper, we have provided upper bounds on on the rate achievable for a fixed locality, distance, and alphabet size. Constructions of locally recoverable codes is an interesting open question especially relevant to practice. To understand the related issues (briefly), consider the special case of binary codes ($q=2$) where, in absence of locality constraints, the best known simple achievable scheme comes via the Gilbert-Varshamov (GV) bound: $R \geq 1-\mathrm{H}_{2}(\delta)$. Note that, to achieve a locality of $r$ with a linear code, it is sufficient for the parity check matrix of the code to have the following property: for every column, there exists a row vector in the parity check matrix with a non-zero entry in that column, and a hamming weight that is no bigger than $r+1$.  A simple construction for locally recoverable codes is constructed by taking the parity check matrix of a code that achieves the GV bound and add $\lceil\frac{n}{r+1}\rceil$ rows to it; each new row has $r+1$ nonzero values and the support of all the (new) rows are disjoint. Clearly, this code has a locality of $r$. Note that this new code has rate:
$$ R \geq 1 - \mathrm{H}_{2}(\delta) - \frac{1}{r+1} = \frac{r}{r+1}-\mathrm{H}_{2}(\delta).$$
For $\delta=0,$ the above clearly meets the outer bound of (\ref{eq:Yekhanin_etal}). However, the above achievable scheme does not meet our bound for larger values of $\delta$. For example, in the regime of Fig. \ref{fig:bound}, i.e., $r=2,$ the above bound implies that $R=0$ for $\delta \geq \mathrm{H}_{2}^{-1}(2/3) \approx 0.18$. Clearly, this is not tight with our bound, where $R > 0$ as long as $\delta < 0.5$. This motivates the following question: what is largest possible (relative) distance of a code with non-zero rate, for a fixed locality and alphabet size? We answer this question in the next section. In particular, we provide two families code constructions that perform well from the perspective of our bounds.

\subsection{Random codes}
Suppose $r+1$ divides $n$. Construct a random code of length $n$ in the following way.
Let $X_{i,j}, 1 \le i \le \frac{n}{r+1}, 1\le j \le r,$ are randomly and uniformly chosen from $\ff_q$. 
Let $X_{i, r+1} = \sum_{j=1}^{r} X_{i,j},$ where the addition is over $\ff_q$.

Assume, $X_{i,j}, 1 \le i \le \frac{n}{r+1}, 1\le j \le r+1,$ is a codeword of a random code. We choose such a random code consisting of
$M$ independent codewords $\mathbf{X}_1,\dots, \mathbf{X}_M$. The length, locality and dimension of any code
in this random ensemble is $n, r$ and $k = \log_q M$ respectively.
\begin{theorem}\label{thm:random}
There exists codes in the above ensemble with
minimum distance at least $d$, where $d$ is given by 
\begin{equation}\label{eq:achiev}
\frac{k}{n} = 1 - \max_{0\le x \le 1}  \Big[  \log_q(1+x(q-1))+ {\frac{1}{r+1}} \log_q \Big(1+(q-1)\Big(\frac{1-x}{1+x(q-1)}\Big)^{r+1}\Big)  -\frac{d}{n}\log_q x\Big].
\end{equation}
\end{theorem}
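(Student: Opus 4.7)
The plan is the standard random-coding-with-expurgation argument, sharpened by a Chernoff bound that exploits the block structure of the ensemble. First I would observe that each codeword in the ensemble is obtained, block by block, by choosing a uniformly random element of the single-parity-check code $C_0 = \{(y_1,\dots,y_{r+1}) \in \ff_q^{r+1} : y_1+\cdots+y_{r+1}=0\}$, and that the difference of two independent draws from this distribution is again a uniform draw from the same distribution (since the difference of two independent uniform elements of a finite abelian group is uniform, and the difference of two sum-zero vectors is sum-zero). Hence for two independent ensemble codewords $\mathbf{X}_i,\mathbf{X}_j$, the probability $\Pr[\Delta_H(\mathbf{X}_i,\mathbf{X}_j) \le d-1]$ equals $\Pr[W \le d-1]$, where $W$ is the Hamming weight of a single ensemble codeword. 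Drawing $M$ independent codewords, the expected number of pairs at distance $\le d-1$ is at most $\binom{M}{2}\Pr[W \le d-1]$, so an expurgation step leaves a subcode of size $\ge M/2$ with minimum distance $\ge d$ whenever $M\,\Pr[W \le d-1] < 1$.

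Next I would bound $\Pr[W \le d-1]$ by Chernoff. Writing $W = \sum_{i=1}^{n/(r+1)} w_i$ with i.i.d.\ $w_i$ distributed as the Hamming weight of a uniform element of $C_0$, Markov's inequality applied to $x^W$ for any $x \in (0,1]$ gives $\Pr[W \le d-1] \le x^{-(d-1)}\bigl(\mathbb{E}[x^{w_1}]\bigr)^{n/(r+1)}$. The factor $\mathbb{E}[x^{w_1}]$ is the normalized weight enumerator of $C_0$, which I would compute via MacWilliams's identity: the dual of $C_0$ is the length-$(r{+}1)$ repetition code whose weight enumerator is $X^{r+1}+(q-1)Y^{r+1}$, yielding
\[
\mathbb{E}[x^{w_1}] \;=\; \frac{(1+(q-1)x)^{r+1} + (q-1)(1-x)^{r+1}}{q^{r+1}}.
\]
Factoring $(1+(q-1)x)^{r+1}/q^{r+1}$ out of the numerator puts this into precisely the shape appearing inside the logarithm in \eqref{eq:achiev}.

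Taking $\log_q$, dividing by $n$, and using $k = \log_q M$ with $M\,\Pr[W\le d-1] < 1$, I obtain, for every $x \in (0,1]$,
\[
\frac{k}{n} \;\le\; 1 - \Bigl[\log_q(1+(q-1)x) + \tfrac{1}{r+1}\log_q\!\Bigl(1+(q-1)\bigl(\tfrac{1-x}{1+(q-1)x}\bigr)^{r+1}\Bigr) - \tfrac{d}{n}\log_q x\Bigr] + o(1).
\]
Optimizing the Chernoff parameter $x$ so as to make the Chernoff bound on $\Pr[W\le d-1]$ as tight as possible (and hence the above lower bound on achievable $k/n$ as large as possible) yields the expression of \eqref{eq:achiev}. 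The argument is a clean instance of the random-coding/expurgation template; the main thing requiring care is verifying that differences of ensemble codewords are again distributed as ensemble codewords, since this is what allows the weight analysis of a single random $\mathbf{X}$ to control the pairwise distance distribution of a non-linear ensemble, and the MacWilliams computation, which is routine once the dual is identified as the repetition code.
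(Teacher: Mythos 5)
Your argument is correct and reaches the same endpoint as the paper, but by a genuinely different concentration step. The paper handles the ``bad-pair'' events $\{\Delta_H(\mathbf{X}_i,\mathbf{X}_j)<d\}$ with the Lov\'asz Local Lemma: the dependency graph among these events has degree $O(M)$, so with positive probability no pair is bad once $\Pr(W<d)\cdot O(M)<1/e$, producing a code of size exactly $M$ in the ensemble. You instead use a first-moment/expurgation step: bound the expected number of bad pairs by $\binom{M}{2}\Pr(W\le d-1)$, delete one endpoint from each, and keep a subcode of size $\ge M/2$. The asymptotic conclusion is identical, since the loss is only $\log_q 2 = o(n)$, and your subcode still has locality $r$ because the per-block single-parity constraint is inherited by every subset of the ensemble's support; the LLL route buys the cosmetic advantage that the surviving code literally has $M$ codewords and hence is a member of the ensemble as defined. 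The other divergence is in computing $\mathbb{E}[x^{w_1}]$: you derive it via MacWilliams from the dual repetition code, whereas the paper plugs in the known weight distribution of the $q$-ary single-parity-check code (citing Roth) and forms $\avg e^{-tX}$ directly; the two coincide upon $x=e^{-t}$. Your observation that the difference of two i.i.d.\ uniform sum-zero blocks is again uniform on the sum-zero code is exactly what lets both proofs reduce the pairwise distance $W$ to the weight of a single random codeword --- the paper uses this implicitly when it says the law of $W$ matches that in the proof of Theorem~\ref{thm:concat}. One bookkeeping note: optimizing the Chernoff parameter to maximize the lower bound $-\tfrac1n\log_q\Pr(W\le d-1)$ on the achievable rate corresponds to \emph{minimizing}, over $x\in(0,1]$, the bracketed expression in \eqref{eq:achiev}; your reasoning already delivers this (the bracket diverges as $x\to 0^+$, so a literal maximum is degenerate), and the paper's own algebra in the proof of Theorem~\ref{thm:concat} likewise produces a $\min$, so the ``$\max$'' in the displayed theorem statement is evidently a typo rather than a discrepancy with your proof.
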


The proof of this theorem follows the usual random coding methods and the calculation is quite similar to that of the following Theorem~\ref{thm:concat}, that
proclaims the same result for a linear code ensemble. We delegate the proof to the appendix.

\subsection{Concatenated codes}
One approach to construct a locally recoverable code over alphabet size $q$ much smaller than blocklength $n$ is 
 to use Forney's concatenated codes \cite{forney1966concatenated}.
 
 Consider a concatenated code with an outer extended Reed-Solomon code over alphabet $\ff_{q^r}$, length $n_o= q^r$, and
 dimension $k_o$. The minimum distance of the code is $d_o = q^r -k_o +1$. 
 The inner code is a simple $q$-ary parity check code of length $r+1$ (i.e., dimension $r$). The overall code has length $n = (r+1)q^r$,
 dimension $k = k_or$ and distance 
 \begin{align*}
 d &= 2(q^r -k_o+1)\\
 &= 2\Big(\frac{n}{r+1} - \frac{k}{r}+1\Big)\\
 \Rightarrow \quad \frac{k}{n} &= \frac{r}{r+1} - \frac{r}{n} \Big(\frac{d}{2}-1\Big).
 \end{align*} 
Comparing with \eqref{eq:plotkin}, we conclude that this construction has some merit for small values of $r$.

Using concatenated code with a \emph{random} linear outer code, we show a much tighter achievability result. Indeed, the following is true.
\begin{theorem}\label{thm:concat}
There exists an infinite family of $[n,k,d]_q$ concatenated codes with locality $r$, such that \eqref{eq:achiev}
is satisfied.
\end{theorem}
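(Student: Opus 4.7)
The plan is to replace the extended Reed--Solomon outer code of the preceding construction with a uniformly random $\mathbb{F}_{q^r}$-linear outer code of length $n_o$ and dimension $k_o$, keeping the inner code fixed as the $q$-ary single parity-check code of length $r+1$. The overall code has parameters $n = (r+1)n_o$, dimension $k = r k_o$, and rate $R = k/n$. Locality is structural: within any inner block, every symbol is the $\mathbb{F}_q$-sum of the other $r$, so locality $r$ holds by inspection. What needs work is the distance guarantee.

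The first step is to write down the $q$-ary weight enumerator of the inner code. The single parity-check code is the dual of the length-$(r+1)$ $q$-ary repetition code, whose weight enumerator is $1 + (q-1)z^{r+1}$. Applying the MacWilliams identity gives
\[
W(z) \;=\; \frac{1}{q}\bigl[(1+(q-1)z)^{r+1} + (q-1)(1-z)^{r+1}\bigr].
\]
Because the inner encoder is a bijection $\mathbb{F}_q^r \to $ (inner codewords), the number $N_w$ of outer words $\mathbf{u} \in (\mathbb{F}_q^r)^{n_o}$ that concatenate to a length-$n$ word of $q$-ary weight exactly $w$ has generating function $\sum_w N_w z^w = W(z)^{n_o}$.

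Next I would carry out the standard first-moment calculation. Under the uniform distribution on $[n_o,k_o]$ linear codes over $\mathbb{F}_{q^r}$, every nonzero $\mathbf{u}$ lies in the outer code with probability at most $q^{r(k_o-n_o)}$. Combining this with the Chernoff-type inequality $\sum_{w<d} N_w \le x^{-(d-1)} W(x)^{n_o}$, valid for all $x \in (0,1]$, gives
\[
\mathbb{E}\bigl[\#\{\text{nonzero codewords of weight}<d\}\bigr] \;\le\; q^{r(k_o-n_o)}\, x^{-(d-1)}\, W(x)^{n_o}.
\]
Taking $\log_q$, dividing by $n=(r+1)n_o$, and using $R = r k_o/((r+1)n_o)$, the condition that this expectation is strictly less than $1$ (so that some outer code in the ensemble has minimum distance $\ge d$) reduces asymptotically to
\[
R \;\le\; 1 + \tfrac{d}{n}\log_q x - \tfrac{1}{r+1}\log_q\!\bigl[(1+(q-1)x)^{r+1} + (q-1)(1-x)^{r+1}\bigr] + o(1),
\]
for every $x\in(0,1]$. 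Taking the tightest such constraint and invoking the algebraic identity
\[
(1+(q-1)x)^{r+1} + (q-1)(1-x)^{r+1} \;=\; (1+(q-1)x)^{r+1}\!\left[1 + (q-1)\!\left(\tfrac{1-x}{1+(q-1)x}\right)^{\!r+1}\right]
\]
splits the logarithm into exactly the two summands appearing in \eqref{eq:achiev}, matching the claimed bound. Letting $n_o\to\infty$ then produces the infinite family.

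The main obstacle is not probabilistic: it is checking that the bound obtained from the first-moment argument is indeed identical to \eqref{eq:achiev}, which requires the logarithm-splitting identity above and careful tracking of the $1/q$ factor in $W(z)$ against the $\frac{r}{r+1}$ term coming from $r(k_o-n_o)/n$. A secondary, essentially cosmetic, issue is absorbing the $(d-1)/n$ versus $d/n$ discrepancy and the boundary behavior at $x\to 0$ into the $o(1)$ term as $n\to\infty$; both are routine.
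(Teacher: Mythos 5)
Your proof is correct and follows essentially the same route as the paper: both build the same concatenated ensemble (a random $\mathbb{F}_{q^r}$-linear outer code followed by a $q$-ary single parity-check inner code of length $r+1$), and both run a first-moment argument with a Chernoff-type bound on the concatenated weight generating function, optimized over the Chernoff parameter $x=e^{-t}$. The differences are cosmetic---you obtain the inner-code weight enumerator via MacWilliams rather than citing its coefficients and then summing, and you organize the union bound over ambient low-weight vectors rather than over nonzero messages---but, since $k=rk_o$ and $W(x)=q^r\,\mathbb{E}[x^X]$, both bookkeeping choices yield the identical expectation bound $q^{k-n}\,x^{-d}(1+(q-1)x)^n\bigl(1+(q-1)(\tfrac{1-x}{1+(q-1)x})^{r+1}\bigr)^{n/(r+1)}$ up to the $d$ versus $d-1$ slack you already flagged.
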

\begin{IEEEproof}
We use an outer random $q^r$-ary linear code of length $\frac{n}{r+1}$ and dimension $\frac{k}{r}$. The inner code is a $q$-ary single
parity-check code of length $r+1$. The overall $q$-ary code has length $n$, dimension $k$ and locality $r$.

For the encoding procedure, any vector in $\ff_q^k \setminus \{0\}$ is first mapped to a vector in $\ff_{q^r}^{k/r}\setminus\{0\}$ and
then encoded to a codeword of the outer code. In the next step, the symbols of the codeword (of the outer code) are mapped to codewords
of the inner code. 

Because the outer code is random linear, for any $\bfu \in \ff_q^k \setminus \{0\}$, the corresponding $q$-ary codeword is
going to have Hamming weight $W$, with,
$$
W = X_1 +X_2 +\dots+X_{\frac{n}{r+1}},
$$
where $X_i \sim X$ are independent identical random variables  such that 
\begin{equation}
\Pr(X = j)=
\begin{cases}
\frac{1}{q^r}\binom{r+1}{j}\frac{q-1}{q} \Big((q-1)^{j-1} +1\Big)& \text{ for  even } j \\
\frac{1}{q^r}\binom{r+1}{j}\frac{q-1}{q} \Big((q-1)^{j-1} - 1\Big)& \text{ for  odd } j.
\end{cases}
\end{equation}
 We used the weight distribution of
$q$-ary single parity-check code from \cite[E.g.~4.6]{Roth_Book}. 
Similar reasoning has been followed in  \cite[Prop.~1]{barg2001concatenated} where more general inner codes were considered. It is instructive to note that when $q=2$, the above equation implies that all even weight codewords are equiprobable, and the odd weight codewords have zero probability.
Now, for any $t>0$,
\begin{align*}
\avg e^{-t X} & =  \frac{q-1}{q^{r+1}} \sum_{j =0}^{r+1} e^{-t j}\binom{r+1}{j} \Big((q-1)^{j-1} +(-1)^j\Big)\\
& = \frac{1}{q^{r+1}}\Big((1+e^{-t}(q-1))^{r+1} + (q-1)(1-e^{-t})^{r+1}\Big).
\end{align*}

Evidently, for any $t>0$,
\begin{align*}
\Pr(W < d) &= \Pr(\sum_{i=1}^{n/(r+1)} X_i  <d)\\
& =\Pr(e^{-t\sum_{i=1}^{n/(r+1)} X_i}  > e^{-td})\\
&\le e^{td} (\avg e^{-t X})^{\frac{n}{r+1}}.
\end{align*}

Therefore, the average number of codewords of weight less than $d$ is at most
\begin{align*}
\min_{0\le t}q^k e^{td} (\avg e^{-t X})^{\frac{n}{r+1}}&  =  q^{k-n} \min_{0\le t} e^{td} \Big((1+e^{-t}(q-1))^{r+1} + (q-1)(1-e^{-t})^{r+1}\Big)^{n/(r+1)}\\
& = q^{k-n} \min_{0\le t} e^{td}(1+e^{-t}(q-1))^n\Big(1+(q-1)\Big(\frac{1-e^{-t}}{1+e^{-t}(q-1)}\Big)^{r+1}\Big)^{\frac{n}{r+1}}\\
& =  q^{k-n} \min_{0\le x\le 1} x^{-d}(1+x(q-1))^n\Big(1+(q-1)\Big(\frac{1-x}{1+x(q-1)}\Big)^{r+1}\Big)^{\frac{n}{r+1}}.
\end{align*}
As long as this number is less than $1$, we must have a code in our ensemble that
has minimum distance at least $d$. This proves the theorem. 
\end{IEEEproof}

It is not immediately apparent as to how tight the bound of \eqref{eq:achiev} is. To see this, let us substitute $x = \frac{d}{(q-1)(n-d)}$, to have
\begin{equation}
\frac{k}{n} = 1-\hq\big(\frac{d}{n}) - \frac{1}{r+1}\log_q \Big(1+(q-1)\Big(1- \frac{d}{n}\cdot\frac{q}{q-1} \Big)^{r+1}\Big).
\end{equation} 
It is clear that at $d=0$,  $\frac{k}{n} =\frac{r}{r+1}$ and at $d = n(1-1/q)$, $\frac{k}{n} =0$. At least at these two points, thus,
the bound of \eqref{eq:achiev} exactly matches the upper bound of \eqref{eq:conv}. We have plotted the bound of \eqref{eq:achiev} by numerically optimizing over the parameter $x$ in Figure~\ref{fig:bound1}. In Figure~\ref{fig:bound1}, we compare this achievability result with our upper bound, assuming the GV conjecture (that the GV bound is the asymptotically optimal achievable rate for a binary error-correcting code). 

\section{Discussions}
\label{sec:discussions}

\subsection{LDPC codes}
As  {\em low density parity check matrix} (LDPC) codes are by definition locally repairable -  any construction
of LDPC codes provides locally recoverable codes. In particular, it is to be noted that there are a number of ensembles of
LDPC codes, either based on random graphs or expander graphs, that have been extensively analyzed for their rate and
distance trade-off \cite{barg2008weight}. It has been observed by Gallager \cite{gallager1962low} and others that the ensemble average distance of such codes approach
the Gilbert-Varshamov bound as the degree of the parity-check graph grows. 
These codes also guarantee multiple recovering sets for each codeword symbol. 

As an example, we can take the ensemble of hypergraph codes with Hamming codes as component codes 
from \cite[Thm.~4]{barg2008weight} and the instances presented therein (see also, \cite{lentmaier1999generalized,boutros1999generalized}).
Using a $[15,11,3]$ Hamming code as local code and a hypergraph with $3$ parts, we are able to construct a code of rate $0.2$,
relative distance $0.2307$ (the GV relative distance for this rate is 0.2430), local repairability $r =11$ and three repair groups for
each symbol. At the same time this codes support cooperative local repair \cite{rawat2014cooperative}, that is, there can be
at most two erasures per repair group, that can still be locally corrected.

\subsection{List decoding}
Our upper-bounding techniques can be extended to bound the local recoverability of a list decodable code. 
Let $A^{(q)}_L(n,s)$ be the maximum possible size of a code $\cC$ such that for any ball of radius $s$ in $\ff_q^n$,
there exist at most $L$ codewords of $\cC$. Such codes are called $(s,L)$-list decodable codes.

\begin{theorem}
Let $\cC$ be an $q$-ary $(s,L)$-list decodable code with length $n$, dimension $k$ and local repairability $r$. Then 
\begin{equation}
k \le \min_{ t \in \integers_{+}} \Big[ tr + k^{(q)}_{L}(n  - t(r+1),s) \Big] ,
\end{equation}
where $k^{(q)}_{L}(n,s) = \log_q A^{(q)}_L(n,s)$, and the minimizing value of $t$ in \eqref{eq:conv}, $t^{\ast}$, must satisfy,
$$
t^{\ast} \le \min \Big\{\Big\lceil \frac{n}{r+1}\Big\rceil,\Big\lceil\frac{k}{r}\Big\rceil\Big\}.
$$ 
\end{theorem}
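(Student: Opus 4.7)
The plan is to mirror the proof of Theorem~\ref{thm:converse} essentially verbatim, replacing the minimum-distance property with $(s,L)$-list decodability only at the single step where the distance condition is actually invoked. Recall that Theorem~\ref{thm:converse} decomposes into Lemma~\ref{lem:1}, a combinatorial statement about how the local-recovery sets $\cR_{i}$ interact with the cardinality function $H(\cdot)$, and Lemma~\ref{lem:2}, a residual-code construction that consumes a low-$H$ set $\cI$ and returns a shorter code of the same minimum distance. Lemma~\ref{lem:1} never references the distance of the code, so it transfers unchanged to any $r$-locally recoverable code, list-decodable or otherwise: for every $1 \le t \le k/r$ there exists $\cI \subseteq \{1,\dots,n\}$ with $|\cI| = t(r+1)$ and $H(\cI) \le tr$.

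The main step is therefore to establish a list-decodable analog of Lemma~\ref{lem:2}: if $\cC$ is $q$-ary, $(s,L)$-list decodable, and $H(\cI) \le m$ for some $\cI$, then there exists a $q$-ary $(s,L)$-list decodable code $\cC'$ of length $n - |\cI|$ with $\log_q |\cC'| \ge (k-m)^{+}$. I would use the same most-popular-projection construction as in Lemma~\ref{lem:2}: at most $q^{H(\cI)} \le q^{m}$ distinct tuples appear as $\bfX_{\cI}$ across $\cC$, so some value $\bfx^{\ast}_{\cI}$ is realized by at least $|\cC|/q^{m}$ codewords; let $\cC'$ be the projections of those codewords onto the complement $\{1,\dots,n\}\setminus\cI$. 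Distinctness of $\cC'$ is automatic because two codewords agreeing on $\cI$ must disagree on the complement if they are distinct. The new point that departs from Lemma~\ref{lem:2} is verifying the list-decodability of $\cC'$: given any putative ball $B(\bfy,s) \subseteq \ff_q^{n-|\cI|}$ containing $L+1$ elements of $\cC'$, I lift each of these back to $\ff_q^{n}$ by appending $\bfx^{\ast}_{\cI}$ on the coordinates of $\cI$, and observe that all $L+1$ lifts lie in the ball $B(\bfz,s)$ around $\bfz = (\bfx^{\ast}_{\cI},\bfy)$, since lifting preserves the Hamming distance on the complement and contributes zero on $\cI$; this contradicts the $(s,L)$-list decodability of $\cC$.

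Combining the adapted Lemma~\ref{lem:2} with Lemma~\ref{lem:1} at parameter $t$ yields $k - tr \le k_{L}^{(q)}(n - t(r+1),s)$ for every $1 \le t \le k/r$, and taking the minimum over feasible $t$ gives the announced inequality. The restriction on $t^{\ast}$ follows from the same two monotonicity observations used in the proof of Theorem~\ref{thm:converse}: for $t \ge \lceil n/(r+1)\rceil$ the length argument of $k_{L}^{(q)}$ becomes non-positive and the objective then grows linearly in $t$, while for $t \ge \lceil k/r\rceil$ the term $tr$ alone already meets or exceeds $k$ and the resulting inequality is vacuous. I do not expect a conceptual obstacle; the only delicate point is the list-decodability check on $\cC'$, and that argument works uniformly in $q$, $s$, and $L$ because the Hamming metric decomposes additively across the coordinates of $\cI$ and its complement.
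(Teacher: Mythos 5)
Your proof is correct and takes essentially the same approach the paper intends: Lemma~\ref{lem:1} carries over verbatim since it never uses the distance, and the list-decodable analog of Lemma~\ref{lem:2} is proved exactly as you describe, via the most-popular-prefix projection together with the lifting argument that a ball $B(\bfz,s)$ with $\bfz=(\bfx^{\ast}_{\cI},\bfy)$ would contain $L+1$ codewords of $\cC$. The paper only states ``the proof of this theorem follows that of Theorem~\ref{thm:converse}''; your lifting check of $(s,L)$-list decodability for the projected code is precisely the one step that must be added, and it works because the Hamming metric splits additively across $\cI$ and its complement.
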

The proof of this theorem follows that of Theorem \ref{thm:converse}. It is known that, for fixed $q$, if $n$ and
$L$ go to infinity, then  $\frac{k^{(q)}_{L}(n,\sigma n)}{n} \to 1- \hq(\sigma)$.
Therefore, the rate $R$ of an locally recoverable $(\sigma n, L)$-list decodable code must satisfy,
\begin{align*}
R &\le \min_{0\le x\le \frac{r}{r+1}} x +  \left(1-x(1+1/{r})\right)\Big(1-\hq\left(\frac{\delta}{1-x(1+1/r)}\right)\Big) \\
& = 1- \max_{0\le x\le \frac{r}{r+1}} \Big[\frac{x}{r} +  \left(1-x(1+1/{r})\right)\hq\left(\frac{\delta}{1-x(1+1/r)}\right)\Big].
\end{align*}
%
\bibliographystyle{abbrv}
\bibliography{Thesis,aryabib}

\begin{appendix}
\subsection{Proof of Lemma \ref{lem:1}}
Consider an $r$-locally recoverable $(n,k,d)$-code. For any $i \in \{1,2,\ldots,n\},$ let $\mathcal{R}_{i}$ denote the corresponding repair-set; by definition $|\mathcal{R}_{i}| = r$. The key idea is to construct a set $\mathcal{I}$ having the desired properties. Note that the proof is trivial for $t=1$ since any codeword symbol in combination with the $r$ symbols that form its local repair set form a valid choice of the set $\mathcal{I}$. The construction of the set $\mathcal{I}$ is more challenging for $t > 1$.  Our construction is essentially similar to \cite{Dimakis_LRC}; we describe our construction here for completeness. We choose 
$$\mathcal{I} = \left(\bigcup_{l=1}^{t} \{a_l\}\cup \mathcal{R}_{a_l} \cup \mathcal{S}_l\right) $$
where $a_1,a_2,\ldots, a_t \in \{1,2,\ldots,n\}$ and $\mathcal{S}_{l} \subset \{1,2,\ldots,n\},l=1,2,\ldots, t$ are chosen as follows: 

\vspace{0.1in}

\begin{enumerate}
\item[Begin] Choose $a_1$ arbitrarily from $\{1,2,\ldots,n\}$. Choose $\mathcal{S}_{1}$ to be the null set.
\item[Loop] For $m=2$ to $m=t$\\
\begin{itemize}
	\item[Step 1:] Choose $a_m$ so that $$a_m \notin \bigcup_{l=1}^{m-1} \{a_l\}\cup \mathcal{R}_{a_l}\cup \mathcal{S}_{l}$$
	\item[Step 2:] Let $\mathcal{I}_{m-1} = \bigcup_{l=1}^{m-1} \{a_{l}\}\cup \mathcal{R}_{a_{l}} \cup \mathcal{S}_{l}.$ Choose $\mathcal{S}_{m}$ to be set of $m(r+1)-\left|\{a_{m}\}\cup \mathcal{R}_{a_{m}} \cup \mathcal{I}_{m-1}\right|$ elements, arbitrarily from $\{1,2,\ldots,n\}- \{a_{m}\}\cup \mathcal{R}_{a_{m}}\cup \mathcal{I}_{m-1}$.
\end{itemize}
\item[End]
\end{enumerate}
This completes the construction. Note that $\mathcal{I}$ constructed above has cardinality $t(r+1)$. It remains to show that $H(\mathcal{I}) \leq tr$. We now intend to show that $H(\mathcal{I}) = H(\mathcal{I}-\{a_1, a_2,\ldots, a_t\})$ from which the desired bound would follow because of
$$H(\mathcal{I}) = H(\mathcal{I}-\{a_1,a_2,\ldots, a_t\}) \leq t(r+1)-t = tr,$$
where we have used the fact that $H(\mathcal{A}) \leq |\mathcal{A}|$ for any set $\mathcal{A}$.
We therefore intend to show a one-to-one mapping between $\{\mathbf{X}_{\mathcal{I}-\{a_1, a_2,\ldots, a_t\}}\}$ and $\{\mathbf{X}_{\mathcal{I}}\}.$ In other words, suppose that $ \mathbf{X}_{\mathcal{I}} \neq \mathbf{\hat{X}}_{\mathcal{I}},$ we need to prove that $\mathbf{X}_{\mathcal{I}-\{a_1, a_2,\ldots, a_t\}}\neq \mathbf{\hat{X}}_{\mathcal{I}-\{a_1, a_2,\ldots, a_t\}}$. Equivalently, suppose that $ \mathbf{X}_{\{a_1,a_2,\ldots, a_t\}} \neq \mathbf{\hat{X}}_{\{a_1, a_2, \ldots, a_t\}},$ we need to prove that $\mathbf{X}_{\mathcal{I}-\{a_1, a_2,\ldots, a_t\}}\neq \mathbf{\hat{X}}_{\mathcal{I}-\{a_1, a_2,\ldots, a_t\}}$.  Suppose a contradiction, i.e., suppose that $\exists, \mathbf{X},\mathbf{\hat{X}}\in \mathcal{C}$ such that
$$ \mathbf{X}_{\{a_1, a_2, \ldots, a_t\}} \neq \mathbf{\hat{X}}_{\{a_1, a_2, \ldots, a_t\}} $$
$$ \mathbf{X}_{\mathcal{I}-\{a_1, a_2, \ldots, a_t\}} = \mathbf{\hat{X}}_{\mathcal{I}-\{a_1, a_2, \ldots, a_t\}} $$
Define $\mathcal{B} =\{j:\mathbf{X}_{j}\neq \mathbf{\hat{X}}_{j}, j \in \{a_1, a_2,\ldots, a_t\}\}$. Note that $\mathcal{B} \subseteq \{a_1, a_2, \ldots a_t\}$. Because of the definition of locality and because $\mathcal{R}_{a_i} \in \mathcal{I}$, the above conditions imply that 
\begin{equation} \mathcal{R}_{i}\cap \mathcal{B}\neq \phi, \forall i \in \mathcal{B} 
\label{eq:something}
\end{equation}
In other words, the repair set associated with any element, $i$, in $\mathcal{B}$ should have at least one element in $\mathcal{B}$, because $\mathbf{X}_{j} = \mathbf{\hat{X}}_{j}$ for all $i \neq j, j \in \mathcal{I}-\mathcal{B}$. We will show that this is a contradiction to our construction. In particular, we will throw away elements from $\mathcal{B}$ one at a time to obtain, from (\ref{eq:something}), a relation of the form $j \cap \mathcal{R}_{j} \neq \phi$ for some $j \in \mathcal{B},$ which is a contradiction. To keep the notation clean, we will show the proof for $\mathcal{B}=\{a_1, a_2, \ldots, a_m\}$, where $m = |\mathcal{B}|$. Our idea generalizes for arbitrary $\mathcal{B}$.
By construction (Step 1), note that $a_m \notin \mathcal{R}_{a_i}, i=1,2,\ldots,m-1.$  Therefore, $a_m$ is not a member of the repair sets of any of the elements of $\mathcal{B}$, and (\ref{eq:something}) implies that
$$\mathcal{R}_{i}\cap \{a_1, a_2\ldots, a_{m-1}\} \neq \phi, \forall i \in \{a_1, a_2, \ldots, a_{m-1}\} $$ 
Similarly, note that ${a}_{m-1} \notin \mathcal{R}_{a_i},i=1,2,\ldots, m-2$ and Therefore, $a_{m-1}$ is not a member of the repair sets of any of the elements of $\mathcal{B}-\{a_m\}$. So we get, 
$$\mathcal{R}_{i}\cap \{a_1, a_2\ldots, a_{m-2}\} \neq \phi, \forall i \in \{a_1, a_2,\ldots, a_{m-2}\} $$ 
Repeating the above procedure $m-1$ times, we get
$$\mathcal{R}_{a_1}\cap \{a_1 \} \neq \phi, $$
which is a contradiction. 

\subsection{Proof of Lemma \ref{lem:2}}
Without loss of generality, let us assume that $\mathcal{I}=\{1,2,\ldots,|\mathcal{I}|\}$. Consider any element $\mathbf{Z}$ of the $\mathcal{S}=\{\mathbf{X}_{\mathcal{I}}: \mathbf{X} \in \mathcal{C}\}.$ Now, notice that the set of all elements of $\mathcal{C}$ which have $\mathbf{Z}$ as a ``prefix'' can be used to construct a codebook $\mathcal{C}(\mathbf{Z})$ of length $(n-|\mathcal{I}|).$ In particular denote $$ \tilde{\mathcal{C}}(\mathbf{Z}) = \{\mathbf{X}_{\{|\mathcal{I}|+1,|\mathcal{I}|+2,\ldots,n\}}: \mathbf{X}_{\mathcal{I}} = \mathbf{Z}\}$$ 
In addition, we can deduce that the codebook $\tilde{\mathcal{C}}(\mathbf{Z})$, has minimum distance $d$. To see this, consider $\mathbf{U},\mathbf{V} \in \tilde{\mathcal{C}}(\mathbf{Z})$ and note that
\begin{equation}
\Delta_H(\mathbf{U},\mathbf{V}) = \Delta_H((\mathbf{Z},\mathbf{U}),(\mathbf{Z},\mathbf{V})) \geq d
\end{equation}
where, above we have used the fact that, by definition of $\tilde{\mathcal{C}}(\mathbf{Z}),$ the tuples $(\mathbf{Z},\mathbf{U})$ and $(\mathbf{Z},\mathbf{V})$ are elements of $\mathcal{C}$ and therefore have a Hamming distance larger than or equal to $d$. Now, all we need to show is that there exists at least one $\hat{\mathbf{Z}} \in \mathcal{S}$ such that the dimension of $\tilde{\mathcal{C}}(\hat{\mathbf{Z}})$ is (at least) as large as $k-m$. This can be shown using an elementary probabilistic counting argument. Specifically, by assuming that $\mathbf{Z}$ is uniformly distributed over $\mathcal{S},$  the average value of $|\tilde{\mathcal{C}}(\mathbf{Z})|$ can be bounded as follows. 
\begin{eqnarray*} 
|\mathcal{C}|=|\mathcal{Q}|^k &=&\sum_{\mathbf{Z} \in \mathcal{S}}|\tilde{\mathcal{C}}(\mathbf{Z})| \\&=& |\mathcal{S}|E\left[\tilde{\mathcal{C}}(\mathbf{Z})\right] \\
\Rightarrow E\left[\tilde{\mathcal{C}}(\mathbf{Z})\right] &=&  \frac{|\mathcal{Q}^{k}|}{|\mathcal{S}|} \\& \geq& \frac{|\mathcal{Q}|^{k}}{|\mathcal{Q}|^{m}} = |\mathcal{Q}|^{k-m}   
\end{eqnarray*}
where, above we have used the premise of the lemma, namely $|\mathcal{S}| = |\mathcal{Q}|^{H(\mathcal{I})} \leq |\mathcal{Q}|^{m}$. Therefore, there is at least one $ \hat{\mathbf{Z}} \in \mathcal{S}$ such that $\tilde{\mathcal{C}}(\hat{\mathbf{Z}}) \geq |\mathcal{Q}|^{k-m}$ thereby resulting in a $(n-|\mathcal{I}|,k-m,d)$ codebook over $\mathcal{Q}$. This completes the proof.

\subsection{Proof of Theorem \ref{thm:random}}

For any two randomly chosen codewords, let $W$ be the distance between them. The distribution of $W$ is exactly the same as
provided in the proof of Theorem  \ref{thm:concat}.

More precisely, define for two codewords $\mathbf{X}_i$ and $\mathbf{X}_j$, $1 \le i,j \le M, i \ne j$, the event $\Omega_{i,j} = \{\Delta_H(\mathbf{X}_i,\mathbf{X}_j) < d \}.$ We have,
$$
\Pr(\Omega_{i,j}) = \Pr(W <d).
$$
Consider the dependency graph of the events $\{\Omega_{i,j}\}_{1\le i,j \le M, i \ne j}$. In this graph two vertices corresponding to the events will
have an edge between them if the events are dependent.
This graph has order $M(M-1)$ and degree at most $2(M-1)$. Hence, using Lov\'{a}sz Local Lemma,
$$
\Pr(\cap_{i,j} \bar{\Omega}_{i,j}) >0,
$$ 
as long as $ \Pr(W<d) (2M-1) <\frac{1}{e}.$ Note that, this means the existence of a locally recoverable code (with parameters $n$, $k$, $d$ and $r$)
 in the ensemble as long as
 $\Pr(W<d)  <\frac{1}{e(2M-1)}$. 
Plugging in the value calculated for $\Pr(W <d)$ from the proof of Theorem  \ref{thm:concat}, we arrive at the statement of the theorem.

\end{appendix}

\end{document}